\newtheorem{proposition}{Proposition}
\newtheorem{lemma}[proposition]{Lemma}
\newtheorem{theorem}[proposition]{Theorem}
\def\squareforqed{\hbox{\rlap{$\sqcap$}$\sqcup$}}
\def\qed{\ifmmode\squareforqed\else{\unskip\nobreak\hfil
\penalty50\hskip1em\null\nobreak\hfil\squareforqed
\parfillskip=0pt\finalhyphendemerits=0\endgraf}\fi}
\def\endenv{\ifmmode\;\else{\unskip\nobreak\hfil
\penalty50\hskip1em\null\nobreak\hfil\;
\parfillskip=0pt\finalhyphendemerits=0\endgraf}\fi}
\newenvironment{proof}{\noindent \textbf{{Proof~} }}{\hfill $\blacksquare$}
\newcounter{remark}
\newenvironment{remark}[1][]{\refstepcounter{remark}\par\medskip\noindent%
\textbf{Remark~\theremark #1} }{\medskip}
\newcounter{example}
\mathchardef\ordinarycolon\mathcode`\:
\def\vcentcolon{\mathrel{\mathop\ordinarycolon}}
\newmdenv[skipabove=7pt,
skipbelow=7pt,
backgroundcolor=darkblue!15,
innerleftmargin=5pt,
innerrightmargin=5pt,
innertopmargin=5pt,
leftmargin=0cm,
rightmargin=0cm,
innerbottommargin=5pt,
linewidth=1pt]{tBox}
\newmdenv[skipabove=7pt,
skipbelow=7pt,
backgroundcolor=blue2!25,
innerleftmargin=5pt,
innerrightmargin=5pt,
innertopmargin=5pt,
leftmargin=0cm,
rightmargin=0cm,
innerbottommargin=5pt,
linewidth=1pt]{dBox}
\newmdenv[skipabove=7pt,
skipbelow=7pt,
backgroundcolor=darkkblue!15,
innerleftmargin=5pt,
innerrightmargin=5pt,
innertopmargin=5pt,
leftmargin=0cm,
rightmargin=0cm,
innerbottommargin=5pt,
linewidth=1pt]{sBox}
\definecolor{darkblue}{RGB}{0,76,156}
\definecolor{darkkblue}{RGB}{0,0,153}
\definecolor{blue2}{RGB}{102,178,255}
\definecolor{darkred}{RGB}{195,0,0}
\newcommand{\nc}{\newcommand}
\nc{\rnc}{\renewcommand}
\nc{\beg}{\begin{equation}}
\nc{\eeq}{{\end{equation}}}
\nc{\beqa}{\begin{eqnarray}}
\nc{\eeqa}{\end{eqnarray}}
\nc{\lbar}[1]{\overline{#1}}
\nc{\bra}[1]{\langle#1|}
\nc{\ket}[1]{|#1\rangle}
\nc{\ketbra}[2]{|#1\rangle\!\langle#2|}
\nc{\braket}[2]{\langle#1|#2\rangle}
\nc{\proj}[1]{| #1\rangle\!\langle #1 |}
\nc{\avg}[1]{\langle#1\rangle}
\nc{\rank}{\operatorname{Rank}}
\nc{\smfrac}[2]{\mbox{$\frac{#1}{#2}$}}
\nc{\tr}{\operatorname{Tr}}
\nc{\ox}{\otimes}
\nc{\dg}{\dagger}
\nc{\dn}{\downarrow}
\nc{\cA}{{\cal A}}
\nc{\cB}{{\cal B}}
\nc{\cC}{{\cal C}}
\nc{\cD}{{\cal D}}
\nc{\cE}{{\cal E}}
\nc{\cF}{{\cal F}}
\nc{\cG}{{\cal G}}
\nc{\cH}{{\cal H}}
\nc{\cI}{{\cal I}}
\nc{\cJ}{{\cal J}}
\nc{\cK}{{\cal K}}
\nc{\cL}{{\cal L}}
\nc{\cM}{{\cal M}}
\nc{\cN}{{\cal N}}
\nc{\cO}{{\cal O}}
\nc{\cP}{{\cal P}}
\nc{\cQ}{{\cal Q}}
\nc{\cR}{{\cal R}}
\nc{\cS}{{\cal S}}
\nc{\cT}{{\cal T}}
\nc{\cU}{{\cal U}}
\nc{\cV}{{\cal V}}
\nc{\cX}{{\cal X}}
\nc{\cY}{{\cal Y}}
\nc{\cZ}{{\cal Z}}
\nc{\cW}{{\cal W}}
\nc{\csupp}{{\operatorname{csupp}}}
\nc{\qsupp}{{\operatorname{qsupp}}}
\nc{\var}{{\operatorname{var}}}
\nc{\rar}{\rightarrow}
\nc{\lrar}{\longrightarrow}
\nc{\polylog}{{\operatorname{polylog}}}
\nc{\wt}{{\operatorname{wt}}}
\nc{\av}[1]{{\left\langle {#1} \right\rangle}}
\nc{\supp}{{\operatorname{supp}}}
\nc{\argmin}{{\operatorname{argmin}}}
\def\x{\xi}
\nc{\RR}{{{\mathbb R}}}
\nc{\CC}{{{\mathbb C}}}
\nc{\FF}{{{\mathbb F}}}
\nc{\NN}{{{\mathbb N}}}
\nc{\ZZ}{{{\mathbb Z}}}
\nc{\PP}{{{\mathbb P}}}
\nc{\QQ}{{{\mathbb Q}}}
\nc{\UU}{{{\mathbb U}}}
\nc{\EE}{{{\mathbb E}}}
\nc{\id}{{\operatorname{id}}}
\nc{\CHSH}{{\operatorname{CHSH}}}
\nc{\be}{\begin{equation}}
\nc{\ee}{{\end{equation}}}
\nc{\bea}{\begin{eqnarray}}
\nc{\eea}{\end{eqnarray}}
\nc{\rU}{\mbox{U}}
\nc{\ob}[1]{#1}
\nc{\SEP}{{\text{\rm SEP}}}
\nc{\NS}{{\text{\rm NS}}}
\nc{\LOCC}{{\text{\rm LOCC}}}
\nc{\PPT}{{\text{\rm PPT}}}
\nc{\EXT}{{\text{\rm EXT}}}
\nc{\Sym}{{\operatorname{Sym}}}
\nc{\ERLO}{{E_{\text{r,LO}}}}
\nc{\ERLOCC}{{E_{\text{r,LOCC}}}}
\nc{\ERPPT}{{E_{\text{r,PPT}}}}
\nc{\ERLOCCinfty}{{E^{\infty}_{\text{r,LOCC}}}}
\nc{\Aram}{{\operatorname{\sf A}}}
\newcommand{\Choi}{Choi-Jamio\l{}kowski }
\def\grd@save@target#1{%
  \def\grd@target{#1}}
\def\grd@save@start#1{%
  \def\grd@start{#1}}
\tikzset{
  grid with coordinates/.style={
    to path={%
      \pgfextra{%
        \edef\grd@@target{(\tikztotarget)}%
        \tikz@scan@one@point\grd@save@target\grd@@target\relax
        \edef\grd@@start{(\tikztostart)}%
        \tikz@scan@one@point\grd@save@start\grd@@start\relax
        \draw[minor help lines,magenta] (\tikztostart) grid (\tikztotarget);
        \draw[major help lines] (\tikztostart) grid (\tikztotarget);
        \grd@start
        \pgfmathsetmacro{\grd@xa}{\the\pgf@x/1cm}
        \pgfmathsetmacro{\grd@ya}{\the\pgf@y/1cm}
        \grd@target
        \pgfmathsetmacro{\grd@xb}{\the\pgf@x/1cm}
        \pgfmathsetmacro{\grd@yb}{\the\pgf@y/1cm}
        \pgfmathsetmacro{\grd@xc}{\grd@xa + \pgfkeysvalueof{/tikz/grid with coordinates/major step}}
        \pgfmathsetmacro{\grd@yc}{\grd@ya + \pgfkeysvalueof{/tikz/grid with coordinates/major step}}
        \foreach \x in {\grd@xa,\grd@xc,...,\grd@xb}
        \node[anchor=north] at (\x,\grd@ya) {\pgfmathprintnumber{\x}};
        \foreach \y in {\grd@ya,\grd@yc,...,\grd@yb}
        \node[anchor=east] at (\grd@xa,\y) {\pgfmathprintnumber{\y}};
      }
    }
  },
  minor help lines/.style={
    help lines,
    step=\pgfkeysvalueof{/tikz/grid with coordinates/minor step}
  },
  major help lines/.style={
    help lines,
    line width=\pgfkeysvalueof{/tikz/grid with coordinates/major line width},
    step=\pgfkeysvalueof{/tikz/grid with coordinates/major step}
  },
  grid with coordinates/.cd,
  minor step/.initial=.2,
  major step/.initial=1,
  major line width/.initial=2pt,
}
\def\problem@s{}
\newcounter{problems@cnt}
\newcommand{\allproblems}{\problem@s}
\definecolor{beamer}{rgb}{0.2,0.2,0.7}
\definecolor{colorone}{rgb}{1,0.36,0.03}
\definecolor{colortwo}{rgb}{0.4,0.77,0.17}
\definecolor{colorthree}{rgb}{0.01,0.51,0.93}
\definecolor{colorfour}{rgb}{0.47,0.26,0.58}
\definecolor{colorfive}{rgb}{0.12,0.55,0.16}
\nc{\st}{\text{subject to} \ }
\nc{\supre}{\text{supremum} \ }
\nc{\sdp}{\text{sdp}}
\newcommand{\sgn}{{\rm sgn}}
\nc{\ith}[1]{{#1}^\mathrm{th}}
\newcommand{\update}[1]{\textcolor{black}{#1}}
\begin{document}
\title{Information recoverability of noisy quantum states}
 
\author{Xuanqiang Zhao}
\affiliation{Institute for Quantum Computing, Baidu Research, Beijing 100193, China}
\affiliation{QICI Quantum Information and Computation Initiative, Department of Computer Science, The University of Hong Kong, Pokfulam Road, Hong Kong, China}

\author{Benchi Zhao}
\affiliation{Institute for Quantum Computing, Baidu Research, Beijing 100193, China}

\author{Zihan Xia}
\affiliation{Institute for Quantum Computing, Baidu Research, Beijing 100193, China}

\author{Xin Wang}
\email{wangxin73@baidu.com}
\homepage{https://www.xinwang.info/}
\affiliation{Institute for Quantum Computing, Baidu Research, Beijing 100193, China}

\begin{abstract}
 Extracting classical information from quantum systems is an essential step of many quantum algorithms. However, this information could be corrupted as the systems are prone to quantum noises, and its distortion under quantum dynamics has not been adequately investigated. In this work, we introduce a systematic framework to study how well we can retrieve information from noisy quantum states. Given a noisy quantum channel, we fully characterize the range of recoverable classical information. This condition allows a natural measure quantifying the information recoverability of a channel. Moreover, we resolve the minimum information retrieving cost, which, along with the corresponding optimal protocol, is efficiently computable by semidefinite programming. As applications, we establish the limits on the information retrieving cost for practical quantum noises and employ the corresponding protocols to mitigate errors in ground state energy estimation.
Our work gives the first full characterization of information recoverability of noisy quantum states from the recoverable range to the recovering cost, revealing the ultimate limit of probabilistic error cancellation.
\end{abstract}

\maketitle

\section{Introduction}
\subsection{Background}
The dynamics of a closed quantum system is described as a unitary evolution~\cite{Griffiths2018}. However, quantum systems are \update{rarely} closed in practice as they unavoidably interacts with the environment. Quantum channels, stemming from the unitary dynamics in a larger Hilbert space, are considered as the proper mathematical formalism depicting the evolution of general quantum systems~\cite{Nielsen2010}. Quantum channels represent the quantum information manipulation processes and are essential to quantum computation~\cite{Nielsen2010,Wilde2017book,Watrous2011b,Bennett2014,Wang2019,Smith2008,WWS19}.

A central \update{subroutine} of quantum computation is to extract classical information from a quantum system. The expectation value of some chosen observable, also known as shadow information, \update{characterizes} physical properties of the quantum system, making estimating expectation values the main goal of many quantum algorithms~\cite{Nielsen2010,Cerezo2020}, such as variational quantum eigensolver (VQE)~\cite{Peruzzo2014}. The importance of expectation values has motivated the study of shadow tomography and related problems~\cite{aaronson2019shadow,aaronson2019gentle,huang2020predicting}.
However, inevitable noises modeled as quantum channels can corrupt the shadow information as they undesirably change the state of the quantum system, preventing us from estimating the expectation value accurately.

It is natural to ask how a quantum channel $\cN$ affects the information stored in quantum states. The minimum fidelity~\cite{Barnum2000,Bennett1997,DiVincenzo1998} of some initial state $\proj{\psi}$ and the final state $\cN(\proj{\psi})$ is the one we often use to answer this question.
But this method is not appropriate for quantifying how well a quantum channel preserves shadow information. Take the completely phase damping channel $\cN_{\rm PD}(\rho) = Z\rho Z$ as an example. While $F_{\min}(\cN_{\rm PD}) = 0$, the expectation value of a diagonal observable (e.g., observable $Z$) is unaffected. Similarly, other fidelity-based measures such as average fidelity and entanglement fidelity~\cite{schumacher1996sending,Barnum2000} are not suitable to depict the shadow information preservation either. Some other measures for information preservation are various channel capacities (e.g., classical capacity~\cite{holevo1973bounds, schumacher1997sending, holevo1998capacity}), where each channel capacity quantifies a quantum channel's utility to transfer information for a certain purpose. 
However, these capacities are derived in asymptotic settings with multiple uses of the channel, and thus are not proper guides to practical tasks with finite quantum resources to some extent. Naturally, the following key questions then arise:
\begin{enumerate}
    \item \textit{What is a proper way to quantify the level of shadow information preservation by a\\
    quantum channel?}
    \item \textit{How to quantify the cost of retrieving a certain piece of shadow information given\\
    it is preserved?}
\end{enumerate}

\subsection{Contributions}
To address these two questions in an operational way, we introduce a framework for retrieving shadow information from a noisy state.
Both practically relevant and theoretically interesting, this framework combines an efficient way to manipulate quantum states and general quantum operations. Note that similar protocols are employed in predicting properties of quantum states~\cite{Buscemi2013,huang2020predicting} and mitigating quantum errors~\cite{Temme2017,Endo2018a,Takagi2020,Jiang2021physical,Piveteau2021}. In particular, we suppose that multiple copies of a noisy state $\cN(\rho)$ are available. For each copy we apply a quantum channel randomly sampled from an ensemble of channels $\{\cD_j\}$ and then measure the observable of interest, as shown in Fig. \ref{fig:setting}. We say that a channel $\cN$ preserves the classical information inquired by an observable $O$ if we can recover this information, that is: there exist a set of quantum channels $\{\cD_j\}$ and a set of real numbers $\{c_j\}$ such that
\begin{align}\label{eq:linear_comb_of_qchannel}
    \sum_j c_j\tr[\cD_j\circ\cN(\rho) O] = \tr[\rho O].
\end{align}
We establish a necessary and sufficient condition for the preservation of the desired shadow information in terms of a relation between $O$ and $\cN$ that should be satisfied. This condition motivates a measure called \textit{shadow destructivity} characterizing a quantum channel's level of destroying shadow information.

We define a measure called \textit{retrieving cost} to quantify the minimum cost of recovering $\tr[\rho O]$, which also quantifies how well the channel preserves the shadow information queried by a specific observable. This measure, along with the concrete retrieving protocol, is efficiently computable with respect to the system dimension via semidefinite programming~\cite{Vandenberghe1996semidefinite}. We analytically obtain the values of this measure and the retrieving protocols for generalized amplitude damping (GAD) channels and depolarizing channels. Our retrieving costs set ultimate limits on the cost required for shadow retrieving, and the corresponding protocols outperform existing probabilistic error cancellation (PEC) methods~\cite{Temme2017,Endo2018a,Takagi2020,Jiang2021physical,Piveteau2021}.
Specifically, we employ our method to estimate the ground state energies of several molecules with VQE. The results show that the gap of sampling overhead between our method and a conventional PEC method gets larger as the size of the quantum system grows, implying potential applications of our method in implementing near-term algorithms.

\begin{figure}[t]
    \centering
    \includegraphics[width=0.5\linewidth]{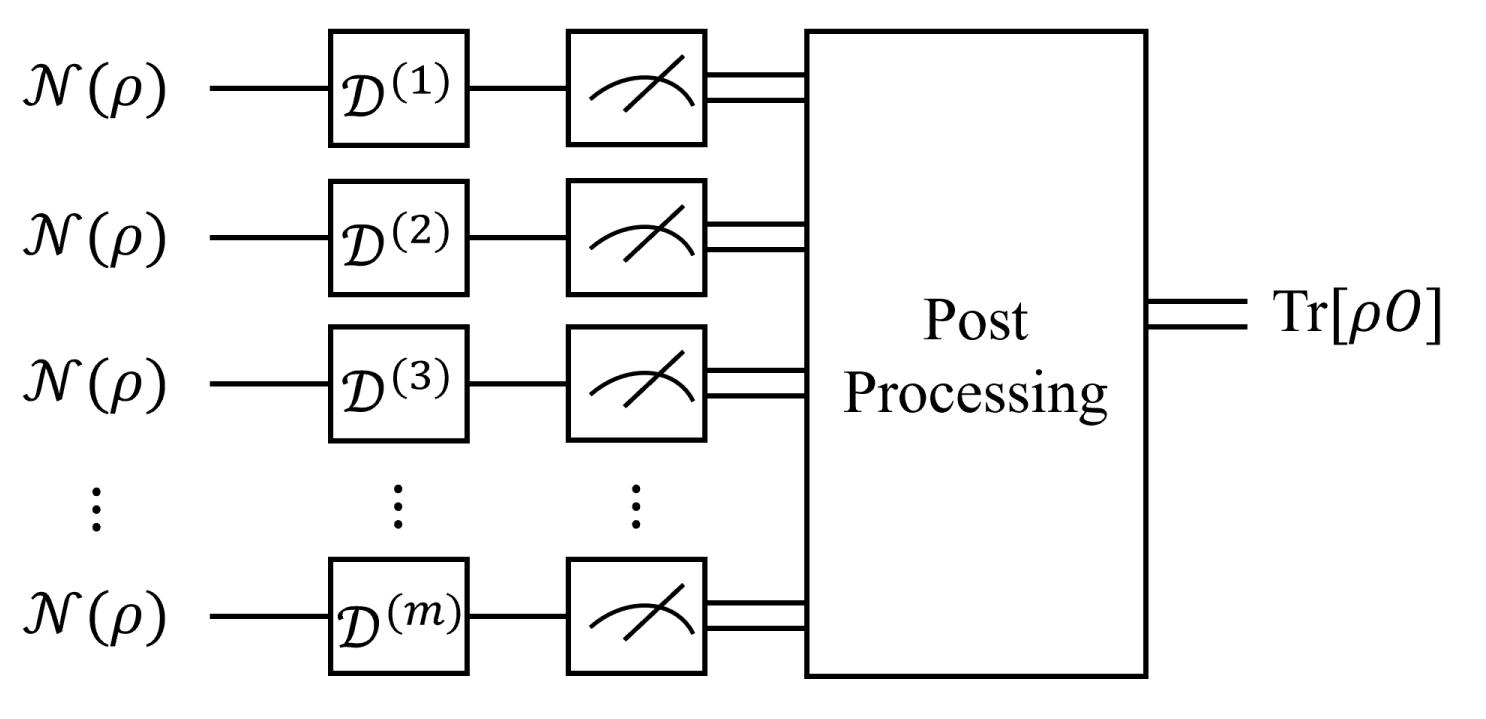}
    \caption{Framework for retrieving shadow information from noisy quantum states. After independently applying $\cD^{(i)}$ randomly selected from $\{\cD_j\}$ to multiple copies of $\cN(\rho)$, estimation of the desired information $\tr[\rho O]$ is obtained via making measurements with the corresponding observable $O$ and post-processing.}
    \label{fig:setting} 
\end{figure}

\subsection{Preliminaries}
Before introducing our results, we set the notations and define several quantities that will be used throughout this paper.
We use symbols such as $\mathcal{H}_A$ and $\mathcal{H}_B$ to denote finite-dimensional Hilbert spaces associated with systems $A$ and $B$, respectively.
We use $d_A$ to denote the dimension of the system $A$. The sets of linear and Hermitian operators acting on $A$ are denoted by $\cL_A$ and $\cL^{\rm H}_A$, respectively.

In this work, we focus on linear maps having the same input and output dimension. A linear map $\cN_{A\to A'}$ transforms linear operators in $\cL_A$ to linear operators in $\cL_{A'}$, where $\cH_{A'}$ is isomorphic to $\cH_A$.
We call a linear map $\cN_{A\to A'}$ a quantum channel if it is completely positive and trace-preserving (CPTP). By saying $\cN$ is completely positive (CP), we mean $\id_R\ox\cN$ is a positive map with a reference system $R$ of arbitrary dimension. By saying $\cN$ is trace-preserving (TP), we mean $\tr[\cN(X)] = \tr[X]$ for all $X\in\cL_A$.
A linear map $\cN$ is called Hermitian-preserving (HP) if $\cN(X)\in\cL^{\rm H}_{A'}$ for all $X\in\cL^{\rm H}_A$. For any linear map $\cN_{A\to A'}$, its \Choi matrix is given by $J_{\cN}\equiv \sum_{i,j=0}^{d_A-1} \ket{i}\bra{j} \ox \cN_{A\to A'} (\ket{i}\bra{j})$, where $\{\ket{i}\}_{i=0}^{d_A-1}$ is an orthonormal basis in $\cH_A$.

\section{Main Results}
\subsection{A Necessary and Sufficient Condition}
We study the preservation of shadow information by considering an operational task: recovering the expectation value $\tr[\rho O]$ of an observable $O$ from multiple copies of a corrupted state $\cN(\rho)$, where $\cN$ is the noisy channel of interest and $\rho$ is an unknown state. A usual way~\cite{Jiang2021physical} is to simulate a Hermitian-preserving and trace-preserving (HPTP) map $\cD$, which we call a retriever in this work, such that $\cD\circ\cN = \id$, where $\id$ is the identity map. While this approach is sufficient for recovering the target information, its requirement on the retriever $\cD$ is generally not necessary as we are only concerned with the expectation value of a specific observable. Hence, the only necessary requirement for a retriever $\cD$ is
\begin{align}\label{eq:necessary_req}
    \tr [\cD\circ\cN(\rho) O] = \tr[\rho O]
\end{align}
for any state $\rho$ and a fixed observable $O$. Our framework extends the set of potential retrievers from HPTP maps to Hermitian-preserving and trace-scaling (HPTS) maps for that HPTS maps are physically simulatable in the way prescribed by Eq.~\eqref{eq:linear_comb_of_qchannel}, as shown in Lemma~\ref{lemma:hpts}. We say a linear map $\cN$ is trace-scaling (TS) if $\tr_{A'}[J_\cN] = pI_A$ for some real number $p$, and HPTP maps are special cases of HPTS maps when $p=1$.

\begin{figure*}[t]
    \centering
    \includegraphics[width=0.8\textwidth]{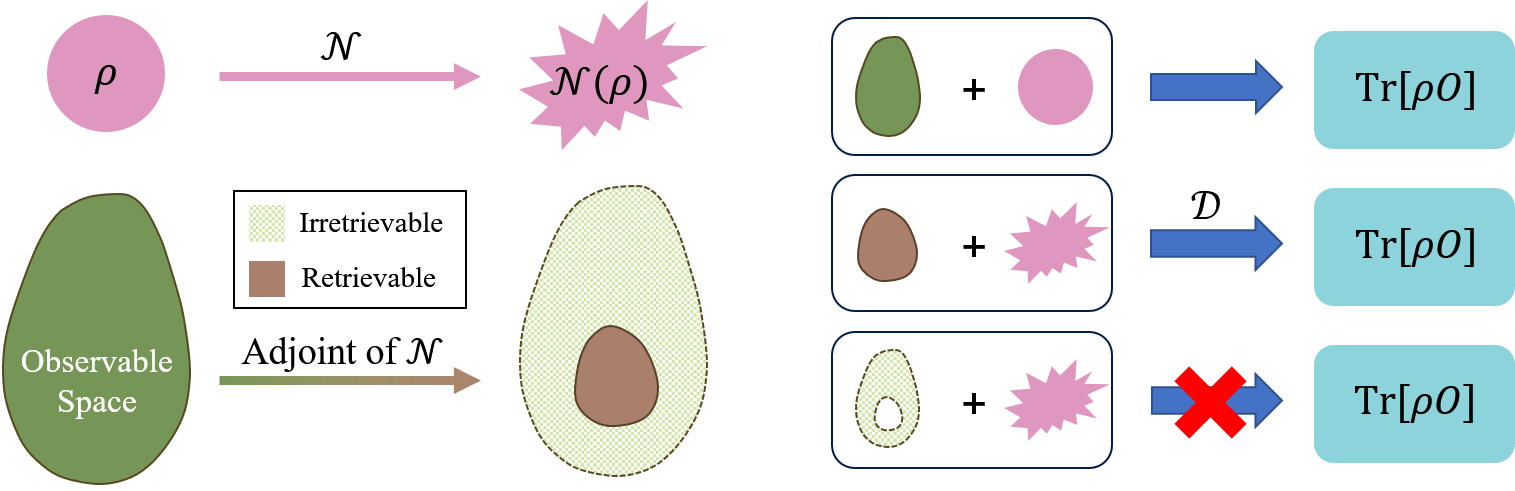}
    \caption{Illustration of Theorem~\ref{thm:applicability}. Suppose a state $\rho$ is corrupted by a channel $\cN$. A piece of shadow information is recoverable by the retriever $\cD$ if and only if it is queried by an observable lying in the brown area, which represents the observable space evolved under the adjoint of $\cN$. This result connects channel corrupting state in Schr\"odinger picture with the evolving of the observable space in Heisenberg picture.}
    \label{fig:main_result}
\end{figure*}

Here, given a quantum channel $\cN$ and an observable $O$, we give the only condition that a retriever $\cD$ needs to satisfy so that Eq.~\eqref{eq:necessary_req} holds for an arbitrary state:
\begin{align}\label{eq:min_req_on_retriever}
    \cN^\dagger\circ\cD^\dagger(O) = O,
\end{align}
where $\cN^\dagger$ and $\cD^\dagger$ are the adjoint maps of $\cN$ and $\cD$, respectively.
This condition is a direct implication of Lemma~\ref{lemma:minimum_requirement}.

\begin{lemma}\label{lemma:minimum_requirement}
Given an observable $O$, an HP map $\cM$ satisfies $\tr[\cM(\rho) O] = \tr[\rho O]$ for any state $\rho$ if and only if it holds that $\cM^\dagger(O) = O$.
\end{lemma}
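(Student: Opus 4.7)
The plan is to invoke the defining property of the adjoint map, namely $\tr[\cM(X)Y]=\tr[X\cM^\dagger(Y)]$ for all linear operators $X,Y$, and then exploit the fact that quantum states span the real vector space of Hermitian operators.

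For the easy direction, assume $\cM^\dagger(O)=O$. Then for every state $\rho$,
\begin{align*}
\tr[\cM(\rho)O]=\tr[\rho\,\cM^\dagger(O)]=\tr[\rho\,O],
\end{align*}
which gives one implication immediately.

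For the reverse direction, first note that because $\cM$ is Hermitian-preserving, so is its adjoint $\cM^\dagger$ (this is standard: $\tr[X\cM^\dagger(Y)]=\tr[\cM(X)Y]$ is real whenever $X,Y$ are Hermitian, so $\cM^\dagger(Y)$ is Hermitian for every Hermitian $Y$). Hence $H\vcentcolon=\cM^\dagger(O)-O$ is Hermitian. The hypothesis, combined with the adjoint identity, becomes
\begin{align*}
\tr[\rho H]=\tr[\rho\,\cM^\dagger(O)]-\tr[\rho O]=\tr[\cM(\rho)O]-\tr[\rho O]=0
\end{align*}
for every state $\rho$. I would now conclude $H=0$ by plugging in a state whose density matrix reveals each eigenvalue: write the spectral decomposition $H=\sum_i\lambda_i\proj{\psi_i}$ and take $\rho=\proj{\psi_i}$, which yields $\lambda_i=\tr[\rho H]=0$ for every $i$. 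Therefore $H=0$, i.e., $\cM^\dagger(O)=O$.

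The proof is essentially routine; the only step that requires a moment's thought is making sure the test operators we plug in on the left-hand side are genuine quantum states (rank-one projectors suffice), and that $\cM^\dagger(O)-O$ is Hermitian so that its spectral decomposition exists and its eigenvalues can be read off in this way. No real obstacle arises beyond being careful with these two observations.
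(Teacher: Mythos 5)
Your proof is correct and follows essentially the same route as the paper's: the forward direction via the adjoint identity (using Hermiticity of $O$ and $\cM^\dagger(O)$), and the converse by showing the Hermitian operator $\cM^\dagger(O)-O$ has vanishing expectation on all states and hence, by testing against its own eigenprojectors, must be zero. No gaps.
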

\begin{proof}
First, we assume that $\cM^\dagger(O) = O$ is true. For $\cM$ being HP, its adjoint map $\cM^\dagger$ is also HP (see Remark~\ref{remark:hp_adjoint}). Hence, $\tr[\cM(\rho) O] = \tr[\rho (\cM^\dagger(O^\dagger))^\dagger] = \tr[\rho \cM^\dagger(O)]$ for both $O$ and $\cM^\dagger(O)$ being Hermitian. Then, with our assumption, we have $\tr[\cM(\rho) O] = \tr[\rho \cM^\dagger(O)] = \tr[\rho O]$.

For the other direction, given $\tr[\cM(\rho) O] = \tr[\rho O]$ for all states $\rho$, we immediately have $\tr[\rho \cM^\dagger(O)] = \tr[\rho O]$. Suppose, for the sake of contradiction, that $\cM^\dagger(O)\neq O$. If we define $\Delta O = \cM^\dagger(O) - O$, then $\tr[\rho \Delta O] = 0$ for all $\rho$. However, this implies that $\Delta O = 0$ since otherwise we can choose a $\sigma$ from the nonzero eigenspace of $\Delta O$ so that $\tr[\sigma \Delta O] \neq 0$. As $\Delta O = 0$ contradicts with the assumption that $\cM^\dagger(O)\neq O$, we conclude that $\tr[\rho \cM^\dagger(O)] = \tr[\rho O]$ holds for all $\rho$ only if $\cM^\dagger(O) = O$.
\end{proof}

\begin{remark}\label{remark:hp_adjoint}
A linear map $\cN$ is Hermitian-preserving (HP) if and only if $J_\cN$ is Hermitian. The adjoint of an HP map $\cN$ is also HP. This can be seen by checking the Hermiticity of its Choi matrix: $J_{\cN^\dagger}^\dagger = (J_\cN^T)^\dagger = J_\cN^T = J_{\cN^\dagger}$ for $J_\cN$ being Hermitian.
\end{remark}

Lemma~\ref{lemma:minimum_requirement} sets the criterion that we should refer to when searching for a retriever. We say a channel $\cN$ preserves the shadow information queried by an observable $O$ if there exists a retriever $\cD$ satisfying Eq.~\eqref{eq:min_req_on_retriever}. It is crucial to ask for what observables the channel preserves their corresponding shadow information.
In Theorem~\ref{thm:applicability}, we establish a necessary and sufficient condition for the preservation of the shadow information by introducing the adjoint image of $\cL^{\rm H}_{A'}$ under the channel $\cN_{A\to A'}$, which we define to be the image of $\cL^{\rm H}_{A'}$ under $\cN^\dagger_{A'\to A}$ and is denoted by $\cN^\dagger(\cL^{\rm H}_{A'})$.

\begin{theorem}{\rm (Necessary and sufficient condition for information preservation).}\label{thm:applicability}
Given a quantum channel $\cN_{A\to A'}$ and an observable $O$, there exists an HPTS map $\cD$ such that $\cM \equiv \cD\circ\cN$ satisfies $\cM^\dagger(O) = O$ if and only if $O \in \cN^\dagger(\cL^{\rm H}_{A'})$.
\end{theorem}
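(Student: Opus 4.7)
The plan is to prove the two implications separately. The forward direction is a short dualization; the converse requires an explicit construction of a retriever, and this is where I expect to spend most of the effort.

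For necessity, suppose $\cD$ is HPTS and $\cM = \cD\circ\cN$ satisfies $\cM^\dagger(O) = O$. Writing $\cM^\dagger = \cN^\dagger\circ\cD^\dagger$, this becomes $\cN^\dagger(\cD^\dagger(O)) = O$. A standard fact about the Hilbert--Schmidt adjoint is that HP of $\cD$ transfers to $\cD^\dagger$: for any Hermitian $X$, $\tr[X\,\cD^\dagger(O)] = \tr[\cD(X)\,O]$, which is real since both $\cD(X)$ and $O$ are Hermitian; hence $\cD^\dagger(O)$ is Hermitian. Consequently $\cD^\dagger(O)\in\cL^{\rm H}_{A'}$ and $O\in\cN^\dagger(\cL^{\rm H}_{A'})$ by definition.

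For sufficiency, fix any $H\in\cL^{\rm H}_{A'}$ with $\cN^\dagger(H)=O$. I would construct $\cD$ by prescribing its adjoint, with two requirements: $\cD^\dagger(I_A)=I_{A'}$, which forces $\cD$ to be trace-preserving (the HPTS case $p=1$), and $\cD^\dagger(O)=H$, which guarantees $\cN^\dagger(\cD^\dagger(O))=\cN^\dagger(H)=O$. When $I_A$ and $O$ are real-linearly independent, I extend $\{I_A, O\}$ to a real basis $\{I_A, O, B_2, \ldots, B_{d_A^2-1}\}$ of $\cL^{\rm H}_A$, declare $I_A\mapsto I_{A'}$, $O\mapsto H$, and $B_i\mapsto 0$ for $i\geq 2$, and extend by complex linearity to a map $\cD^\dagger:\cL_A\to\cL_{A'}$; because every Hermitian operator is a real combination of the basis and each prescribed image is Hermitian, $\cD^\dagger$ is HP. The degenerate case $O=\lambda I_A$ is immediate: $\cN^\dagger(I_{A'})=I_A$ by TP of $\cN$ lets me choose $H=\lambda I_{A'}$, and any HPTP isomorphism $\cD:\cL_{A'}\to\cL_A$ (e.g., one induced by a unitary between the isomorphic spaces $\cH_{A'}$ and $\cH_A$) then automatically satisfies $\cD^\dagger(O)=\lambda I_{A'}=H$.

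The main obstacle I anticipate is only the bookkeeping around the final step: confirming that prescribing $\cD^\dagger$ on a Hermitian basis and extending by complex linearity actually yields a legitimate HP linear map, and that this freedom is compatible with the trace-scaling condition. Both reduce to the routine identification of HP maps from $\cL_A$ to $\cL_{A'}$ with real-linear maps $\cL^{\rm H}_A\to\cL^{\rm H}_{A'}$, together with the observation that a real basis of $\cL^{\rm H}_A$ is automatically a complex basis of $\cL_A$. Substituting the construction back yields $\cM^\dagger(O) = \cN^\dagger(\cD^\dagger(O)) = \cN^\dagger(H) = O$, completing the proof.
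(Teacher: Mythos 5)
Your proof is correct, and while the necessity direction coincides with the paper's (both reduce to the fact that the adjoint of an HP map is HP, so that $Q=\cD^\dagger(O)$ is a Hermitian preimage of $O$ under $\cN^\dagger$), your sufficiency argument takes a genuinely different route. The paper constructs an explicit Hermitian-preserving, unit-scaling map of measure-and-prepare form, $\cD^\dagger(\cdot)=\frac{Q}{t}\tr[P\cdot P]+\frac{k(I-Q)}{t(d-k)}\tr[P^\perp\cdot P^\perp]$, built from the spectral projections of $O$; because the normalizations $1/t$ and $1/(d-k)$ can degenerate, it must split into three cases according to the rank $k$ and trace $t$ of $O$, patching the bad cases by shifting $O$ by a multiple of a projector or of the identity and correcting $Q$ afterwards. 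You instead prescribe $\cD^\dagger$ on a real basis of $\cL^{\rm H}_A$ containing $I$ and $O$ (sending $I\mapsto I$, $O\mapsto H$, the remaining basis elements to $0$) and extend by complex linearity, relying on the standard identification of HP maps with real-linear maps between the Hermitian subspaces and on a real basis of $\cL^{\rm H}_A$ being a complex basis of $\cL_A$; your only case split is whether $O$ is a scalar multiple of $I$. This is shorter, avoids the normalization bookkeeping entirely, and actually yields a slightly stronger conclusion: your retriever is HPTP (unital adjoint, i.e., the trace-scaling case $p=1$), whereas the paper's construction only gives trace-scaling with factor $k/t$. What the paper's version buys in exchange is a basis-free, closed-form expression for the retriever in terms of $O$ and $Q$; nothing in the statement of the theorem requires that.
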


The proof for the necessity part is rather straightforward, while that for the sufficiency part is nontrivial. To prove the ``if'' part, we denote by $Q$ a Hermitian operator that evolves into the observable $O$ under the map $\cN^\dagger$, i.e., $\cN^\dagger(Q) = O$. Then, it is sufficient to construct an HPTS map $\cD$ such that $\cD^\dagger(O) = Q$. According to Lemma~\ref{lemma:hpts_adjoint}, we can instead construct an HP and unit-scaling map $\cD^\dagger$, which guarantees that the map $\cD$ is HPTS. By saying that $\cD^\dagger$ is unit-scaling, we mean that $\cD^{\dagger}$ scales the identity operator. The main difficulty of this construction is to ensure that $\cD^\dagger$ is unit-scaling. To overcome this difficulty, we split all the observables into three categories by their rank and trace.

We first consider observables that do not have full rank and are not traceless, i.e., $\tr[O] \neq 0$. For these observables, we exploit their kernel so that the constructed $\cD^\dagger$ is unit-scaling. With some manipulation, the other two cases can be reduced to this one. The detailed proof is given below.

\begin{proof}
For the ``only if'' part, suppose such a map $\cD$ exists for the given channel $\cN$ and non-zero observable $O$. Note that $\cD^\dagger$ is also a Hermitian-preserving map since its Choi operator is Hermitian, as we have noted in Remark~\ref{remark:hp_adjoint}. Then, we can let $Q = \cD^\dagger(O)$ so that $\cN^\dagger(Q) = \cN^\dagger \circ \cD^\dagger(O) = \cM^\dagger(O) = O$. Hence, $O \in \cN^\dagger(\cL^{\rm H}_{A'})$ if there exists an HPTS map $\cD$ such that $\cM \equiv \cD\circ\cN$ satisfies $\cM^\dagger(O) = O$.

For the ``if'' part, suppose $O \in \cN^\dagger(\cL^{\rm H}_{A'})$, which means that there exists a Hermitian operator $Q$ such that $\cN^\dagger(Q) = O$. It is sufficient to construct an HPTS map $\cD$ satisfying $\cD^\dagger(O) = Q$ so that $\cN^\dagger(Q) = \cN^\dagger \circ \cD^\dagger(O) = \cM^\dagger(O) = O$. By Lemma~\ref{lemma:hpts_adjoint}, it is equivalent to construct a Hermitian-preserving and unit-scaling map $\cD^\dagger$ satisfying the above requirement. Denoting the rank of the observable $O$ by $k$, the trace of $O$ by $t$, and the dimension of the Hilbert space $\cH_A$ by $d$, we complete the proof by constructing a Hermitian-preserving and unit-scaling map $\cD^\dagger$ satisfying $\cD^\dagger(O) = Q$ in each of the following three cases.
\begin{itemize}
    \item Case 1: $k<d$ and $t\neq0$.
    Let $P$ be the projection on the support of $O$ and $P^\perp$ be the projection on the kernel of $O$. We can construct a map $\cD^\dagger$ such that
    \begin{align}
        \cD^\dagger(\cdot) = \frac{Q}{t}\tr[P\cdot P] + \frac{k(I-Q)}{t(d-k)}\tr[P^\perp\cdot P^\perp].
    \end{align}
    Clearly, $\cD^\dagger$ is an HP map. Note that
    \begin{align}
        \cD^\dagger(O) &= \frac{Q}{t}\tr[POP] + \frac{k(I-Q)}{t(d-k)}\tr[P^\perp OP^\perp]\\
        &= \frac{Q}{t}\tr[O]\\
        &= Q
    \end{align}
    and
    \begin{align}
        \cD^\dagger(I) &= \frac{Q}{t}\tr[PIP] + \frac{k(I-Q)}{t(d-k)}\tr[P^\perp IP^\perp]\\
        &= \frac{Q}{t}\tr[P] + \frac{k(I-Q)}{t(d-k)}\tr[P^\perp]\\
        &= \frac{Q}{t}\cdot k + \frac{k(I-Q)}{t(d-k)}\cdot(d-k)\\
        &= \frac{k}{t}I,
    \end{align}
    which implies that $\cD^\dagger$ is also unit-scaling and $\cD^\dagger(O) = Q$.

    \item Case 2: $k<d$ and $t=0$.
    Since $t=0$, $O$ has at least two distinct eigenvalues. Let $O = \sum_j \lambda_j \proj{\psi_j}$ be the spectral decomposition of $O$, where each $\lambda_j\neq 0$. Define $\Tilde{O}\equiv O-\lambda_0\proj{\psi_0}$ and $\Tilde{Q}\equiv Q+\Delta Q$, where $\Delta Q$ is a Hermitian operator that we will fix later. Let \update{$\Tilde{k}$} denote the rank of $\Tilde{O}$ and $\Tilde{t}$ the trace of $\Tilde{O}$. Following the definition of $\Tilde{O}$, we have $\Tilde{k} = k-1$ and $\Tilde{t} = -\lambda_0$. As $\Tilde{k}<d$ and $\Tilde{t}\neq0$, we can construct a Hermitian-preserving and unit-scaling map
    \begin{align}
        \cD^\dagger(\cdot) &= \frac{\Tilde{Q}}{\Tilde{t}}\tr[\Tilde{P}\cdot \Tilde{P}] + \frac{\Tilde{k}(I-\Tilde{Q})}{\Tilde{t}(d-\Tilde{k})}\tr[\Tilde{P}^\perp\cdot \Tilde{P}^\perp]\\
        &= -\frac{\Tilde{Q}}{\lambda_0}\tr[\Tilde{P}\cdot \Tilde{P}] - \frac{(k-1)(I-\Tilde{Q})}{\lambda_0(d-k+1)}\tr[\Tilde{P}^\perp\cdot \Tilde{P}^\perp],
    \end{align}
    where $\Tilde{P}$ denotes the projection on the support of $\Tilde{O}$ and $\Tilde{P}^\perp$ be the projection on the kernel of $\Tilde{O}$. By the definition of $\Tilde{O}$, we have $\Tilde{P}O\Tilde{P} = \sum_{j>0} \lambda_j\proj{\psi_j}$ and $\Tilde{P}^\perp O\Tilde{P}^\perp = \lambda_0\proj{\psi_0}$, and thus $\tr[\Tilde{P}O\Tilde{P}] = -\lambda_0$ and $\tr[\Tilde{P}^\perp O\Tilde{P}^\perp] = \lambda_0$. Hence,
    \begin{align}
        \cD^\dagger(O) &= -\frac{\Tilde{Q}}{\lambda_0}\cdot(-\lambda_0) - \frac{(k-1)(I-\Tilde{Q})}{\lambda_0(d-k+1)}\cdot\lambda_0\\
        &= \frac{d}{d-k+1}Q + \frac{d}{d-k+1}\Delta Q - \frac{k-1}{d-k+1}I.
    \end{align}
    Setting $\Delta Q = \frac{k-1}{d}(I-Q)$, we will have $\cD^\dagger(O) = Q$.

    \item Case 3: $k=d$.
    We first consider the case where $O = cI$ for some real coefficient $c$. As $Q$ can be any Hermitian operator satisfying $\cN^\dagger(Q) = O$, and we know that $\cN^\dagger$ is unital, we can let $Q = O$ so that $\cD^\dagger$ being the identity map $\id$ will complete the proof.
    
    Now, consider the case where $O\neq cI$ for any real coefficient $c$, which implies that $O$ must have at least two distinct eigenvalues. Let $O = \sum_j \lambda_j \proj{\psi_j}$ be the spectral decomposition of $O$, where each $\lambda_j\neq 0$ and $\lambda_0$ is the smallest eigenvalue. Then, we define $\Tilde{O}\equiv O-\lambda_0I$ and $\Tilde{Q}\equiv Q+\Delta Q$, where $\Delta Q$ is a Hermitian operator that we will fix later. Let \update{$\Tilde{k}$} denote the rank of $\Tilde{O}$ and $\Tilde{t}$ the trace of $\Tilde{O}$. Following the definition of $\Tilde{O}$, we have $\Tilde{k} < d$ and $\Tilde{t} = t-d\cdot\lambda_0 > 0$. Hence, by Case 1, we can construct a Hermitian-preserving and unit-scaling map
    \begin{align}
        \cD^\dagger(\cdot) &= \frac{\Tilde{Q}}{\Tilde{t}}\tr[\Tilde{P}\cdot \Tilde{P}] + \frac{\Tilde{k}(I-\Tilde{Q})}{\Tilde{t}(d-\Tilde{k})}\tr[\Tilde{P}^\perp\cdot \Tilde{P}^\perp]\\
        &= \frac{\Tilde{Q}}{t-d\cdot\lambda_0}\tr[\Tilde{P}\cdot \Tilde{P}] + \frac{\Tilde{k}(I-\Tilde{Q})}{(t-d\cdot\lambda_0)(d-\Tilde{k})}\tr[\Tilde{P}^\perp\cdot \Tilde{P}^\perp],
    \end{align}
    where $\Tilde{P}$ denotes the projection on the support of $\Tilde{O}$ and $\Tilde{P}^\perp$ be the projection on the kernel of $\Tilde{O}$. By the definition of $\Tilde{O}$, we have $\Tilde{P}O\Tilde{P} = \sum_{\lambda_j\neq\lambda_0} \lambda_j\proj{\psi_j}$ and $\Tilde{P}^\perp O\Tilde{P}^\perp = \sum_{\lambda_j=\lambda_0} \lambda_j\proj{\psi_j}$, and thus $\tr[\Tilde{P}O\Tilde{P}] = t-(d-\Tilde{k})\lambda_0$ and $\tr[\Tilde{P}^\perp O\Tilde{P}^\perp] = (d-\Tilde{k})\lambda_0$. Hence,
    \begin{align}
        \cD^\dagger(O) &= \frac{\Tilde{Q}}{t-d\cdot\lambda_0}\cdot(t-(d-\Tilde{k})\lambda_0) + \frac{\Tilde{k}(I-\Tilde{Q})}{(t-d\cdot\lambda_0)(d-\Tilde{k})}\cdot(d-\Tilde{k})\lambda_0\\
        &= Q + \Delta Q + \frac{\Tilde{k}\cdot\lambda_0}{t-d\cdot\lambda_0}I.
    \end{align}
    Setting $\Delta Q = -\frac{\Tilde{k}\cdot\lambda_0}{t-d\cdot\lambda_0}I$ results in $\cD^\dagger(O) = Q$.
\end{itemize}
Since any observable $O$ can be categorized into one of the three cases above, we conclude that given a quantum channel $\cN$, if $O\in{\rm Image(\cN^\dagger)}$, then there exists a Hermitian-preserving and unit-scaling map $\cD^\dagger$ such that $\cD^\dagger(O) = Q$ for some $Q$ satisfying $\cN^\dagger(Q) = O$. As the adjoint of a Hermitian-preserving and unit-scaling map is HPTS, we can always find an HPTS retriever $\cD$ so that $\cN^\dagger\circ\cD^\dagger(O) = \cM^\dagger(O) = O$ given $O\in\cN^\dagger(\cL^{\rm H}_{A'})$.
\end{proof}

As illustrated in Fig.~\ref{fig:main_result}, Theorem~\ref{thm:applicability} implies that the channel $\cN$ preserves the shadow information or, equivalently, the shadow information is retrievable, if and only if the corresponding observable $O$ is in the adjoint image of $\cL^{\rm H}_{A'}$ under $\cN$.
A physical interpretation of this theorem can be given within the Heisenberg picture, where quantum states are constant while observables evolve with time~\cite{Wilde2017book}. From this perspective, Theorem~\ref{thm:applicability} is saying that $\tr[\rho O]$ can be recovered if and only if there is an observable $Q$ evolving into $O$ under the backward dynamics prescribed by the given $\cN$. This theorem also manifests the ultimate limitation of a quantum channel on preserving the shadow information queried by some observables.

\subsection{Quantifying the Level of Shadow Information Preservation}
Following Theorem~\ref{thm:applicability}, for a quantum channel $\cN_{A\to A'}$, we define the dimension of $\cN^\dagger(\cL^{\rm H}_{A'})$ as the channel $\cN$'s \textit{effective shadow dimension} $d_{\rm s}(\cN)$, which quantifies how much shadow information is preserved by $\cN$. To compute the effective shadow dimension, we note that the set of all Hermitian operators $\cL^{\rm H}_{A'}$ can be viewed as a vector space over the field of real numbers, and the set of all linear operators $\cL_{A'}$ is a vector space over the field of complex numbers. Both vector spaces are with dimension $d_{A'}^2$, where $d_{A'} = \dim(\cH_{A'})$. Note that a basis for $\cL^{\rm H}_{A'}$ is also a basis for $\cL_{A'}$, which implies that $\dim(\cN^\dagger(\cL^{\rm H}_{A'})) = \dim(\cN^\dagger(\cL_{A'}))$. Then, by the properties of the adjoint~\cite{Axler2015linear}, the image of $\cN^\dagger$ is the space orthogonal to the null space of $\cN$. Hence, the dimension of $\cN^\dagger(\cL^{\rm H}_{A'})$ is $d_{A'}^2 - \dim({\rm null}(\cN)) = \dim({\rm Image}(\cN))$, which equals the rank of a matrix of $\cN$ as a linear map.

A matrix of $\cN$ as a linear map can be obtained from the channel's Kraus representation. Given $\cN(\ketbra{i}{j}) = \sum_k E_k\ketbra{i}{j}E_k^\dagger$, we have ${\rm vec}(\cN(\ketbra{i}{j})) = \sum_k \overline{E_k}\ket{j} \ox E_k\ket{i} = (\sum_k \overline{E_k} \ox E_k) \ket{ji}$, where $\overline{E_k}$ is the complex conjugate of $E_k$. Hence, $M_\cN \equiv \sum_k \overline{E_k} \ox E_k$ is a matrix representation of $\cN$ as a linear map. The effective shadow dimension $d_{\rm s}(\cN)$ is efficiently computable with respect to the system dimension as the matrix rank of $M_\cN$.

Based on the effective shadow dimension, we define a measure called \emph{shadow destructivity}:
\begin{align}
    \zeta(\cN) \equiv \log \frac{d^2}{d_{\rm s}(\cN)},
\end{align}
where $d$ is the dimension of the Hilbert space that the input linear operators in $\cL_{A}$ act on. The shadow destructivity quantifies a quantum channel's \update{capability} to destruct shadow information and has some meaningful properties:
\begin{enumerate}
    \item (\emph{Faithfulness}) $\zeta(\cN) = 0$ if and only if all the shadow information is recoverable. In other words, $\zeta$ is strictly larger than $0$ if and only if some shadow information is irreversibly lost.

    \begin{proposition}{\rm (Faithfulness of shadow destructivity).}
    The shadow destructivity is faithful in the sense that $\zeta(\cN) \geq 0$ and it vanishes, i.e., $\zeta(\cN) = 0$, if and only if for each observable $O$, there exists a retriever $\cD$ such that $\tr[\cD\circ\cN(\rho) O] = \tr[\rho O]$ for every quantum state $\rho$.
    \end{proposition}
    
    \begin{proof}
    First suppose $\zeta(\cN) \equiv \log(d^2/d_{\rm s}(\cN)) = 0$, which is only true when $d_{\rm s}(\cN) = d^2$ and thus implying that $\cN^\dagger$ is surjective, i.e., $\cL^{\rm H}$ being the image of $\cN^\dagger$. In other words, the image of $\cN^\dagger$ consists of all observables. Then, according to Theorem~\ref{thm:applicability} and Lemma~\ref{lemma:minimum_requirement}, for every observable $O$, there is a retriever $\cD$ such that $\tr[\cD\circ\cN(\rho) O] = \tr[\rho O]$ for every state $\rho$.
    
    Now suppose that for every observable $O$, there is a retriever $\cD$ such that $\tr[\cD\circ\cN(\rho) O] = \tr[\rho O]$ for every state $\rho$. By Theorem~\ref{thm:applicability}, this implies that all observables are in the image of $\cN^\dagger$. Since the image can only be a subspace of $\cL^{\rm H}$, we know that the image is $\cL^{\rm H}$ itself, whose dimension is $d^2$. Hence, $d_{\rm s}(\cN) \equiv \dim(\cN^\dagger(\cL^{\rm H})) = d^2$, and we conclude that $\zeta(\cN) \equiv \log(d^2/d_{\rm s}(\cN)) = 0$.
    
    We have shown that $\zeta(\cN) = 0$ if and only if all the shadow information is recoverable. In addition, the shadow destructivity is non-negative by its definition. Therefore, the shadow destructivity is faithful.
    \end{proof}
    
    \item (\emph{Additivity}) The global shadow destructivity equals the sum of the shadow destructivity of local channels, that is, $\zeta(\cM\ox\cN) = \zeta(\cM) + \zeta(\cN)$ for two quantum channels $\cM$ and $\cN$ acting on different subsystems.

    \begin{proposition}{\rm (Additivity of shadow destructivity).}
    The shadow destructivity is additive with respect to tensor product, i.e.,
    \begin{align}
        \zeta(\cM\ox\cN) = \zeta(\cM) + \zeta(\cN),
    \end{align}
    where $\cM_{A\to A'}$ and $\cN_{B\to B'}$ are two quantum channels.
    \end{proposition}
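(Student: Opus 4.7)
The plan is to reduce the claim to a statement about the rank of a Kraus-operator matrix representation, and then to exploit the multiplicativity of rank under tensor products. By the preceding computation of $d_{\rm s}$, we know that $d_{\rm s}(\cN)$ equals the matrix rank of $M_\cN := \sum_k E_k \ox E_k$ where $\{E_k\}$ is any Kraus representation of $\cN$. So the whole task is to show $d_{\rm s}(\cM \ox \cN) = d_{\rm s}(\cM)\cdot d_{\rm s}(\cN)$; additivity of $\zeta$ then follows immediately since
\begin{align}
\zeta(\cM\ox\cN) = \log\frac{(d_Ad_B)^2}{d_{\rm s}(\cM)\,d_{\rm s}(\cN)} = \log\frac{d_A^2}{d_{\rm s}(\cM)} + \log\frac{d_B^2}{d_{\rm s}(\cN)} = \zeta(\cM)+\zeta(\cN).
\end{align}

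First I would fix Kraus representations $\cM(\cdot)=\sum_i F_i(\cdot)F_i^\dagger$ and $\cN(\cdot)=\sum_k E_k(\cdot)E_k^\dagger$, so that a natural Kraus representation of the tensor-product channel is $\{F_i\ox E_k\}_{i,k}$. Forming the associated matrix, I get
\begin{align}
M_{\cM\ox\cN} \;=\; \sum_{i,k} (F_i\ox E_k)\ox (F_i\ox E_k).
\end{align}
The next step is to apply the standard ``reshuffle'' permutation $\Pi$ on the four tensor factors that swaps the second and third slots; this is a unitary (a permutation of basis vectors) and hence rank-preserving. Conjugating by $\Pi$ gives
\begin{align}
\Pi\, M_{\cM\ox\cN}\, \Pi^\dagger \;=\; \Bigl(\sum_i F_i\ox F_i\Bigr)\ox\Bigl(\sum_k E_k\ox E_k\Bigr) \;=\; M_\cM \ox M_\cN.
\end{align}

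Finally I would invoke the multiplicativity of matrix rank under tensor products, $\operatorname{rank}(M_\cM\ox M_\cN)=\operatorname{rank}(M_\cM)\cdot\operatorname{rank}(M_\cN)$, together with the invariance of rank under the unitary permutation $\Pi$, to conclude $d_{\rm s}(\cM\ox\cN)=d_{\rm s}(\cM)\,d_{\rm s}(\cN)$, and substitute into the definition of $\zeta$ as above. The only mildly delicate step is verifying that the reshuffle $\Pi$ really produces the clean tensor factorization, but this is a routine bookkeeping exercise on basis elements $\ket{i_1}\ox\ket{j_1}\ox\ket{i_2}\ox\ket{j_2}\mapsto\ket{i_1}\ox\ket{i_2}\ox\ket{j_1}\ox\ket{j_2}$ and presents no real obstacle; no independence or linear-algebraic subtlety beyond rank multiplicativity is needed.
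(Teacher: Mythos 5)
Your proof is correct and follows essentially the same route as the paper's: both reduce additivity of $\zeta$ to the identity $d_{\rm s}(\cM\ox\cN)=d_{\rm s}(\cM)\,d_{\rm s}(\cN)$ and then invoke multiplicativity of matrix rank under tensor products. The only difference is that you explicitly carry out the reshuffle showing $M_{\cM\ox\cN}$ is permutation-equivalent to $M_\cM\ox M_\cN$, a bookkeeping step the paper delegates to a citation by treating $\cM\ox\cN$ abstractly as a tensor product of linear maps.
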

    \begin{proof}
    We have showed that the effective shadow dimension of a channel equals the rank of the matrix of it as a linear map, that is: $d_{\rm s}(\cM) = {\rm rank}(\update{M_\cM})$, $d_{\rm s}(\cN) = {\rm rank}(\update{M_\cN})$, and $d_{\rm s}(\cM\ox\cN) = {\rm rank}(\update{M_{\cM\ox\cN}})$. Hence, $\zeta(\cM\ox\cN) \equiv \log((d_Ad_B)^2/d_{\rm s}(\cM\ox\cN)) = \log(d_A^2d_B^2/{\rm rank}(\update{M_{\cM\otimes\cN}}))$. It is also known that~\cite{Laub2005matrix}
    \begin{align}
    {\rm rank}(\update{M_{\cM\otimes\cN}}) \update{= {\rm rank}(M_\cM\otimes M_\cN)} = {\rm rank}(\update{M_\cM}){\rm rank}(\update{M_\cN}). 
    \end{align}
    By this property, we have 
    \begin{align}
     \zeta(\cM\ox\cN) =& \log(d_A^2d_B^2/({\rm rank}(\update{M_\cM}){\rm rank}(\update{M_\cN}))) \\
     =& \log(d_A^2/{\rm rank}(\update{M_\cM})) + \log(d_B^2/{\rm rank}(\update{M_\cN})) \\
     =& \log(d_A^2/d_{\rm s}(\cM)) + \log(d_B^2/d_{\rm s}(\cN)) = \zeta(\cM) + \zeta(\cN).
    \end{align}
    \end{proof}

    \item (\emph{Data processing inequality}) The shadow destructivity cannot increase when a quantum state is sent through more channels, that is, $\zeta(\cN\circ\cM) \geq \max(\zeta(\cN),\zeta(\cM))$. This property implies that we can only lose shadow information by sending quantum states through quantum channels.

    \begin{proposition}{\rm (Data processing inequality of shadow destructivity).}
    For two quantum channels $\cM$ and $\cN$, it holds that $\zeta(\cN\circ\cM) \geq \max(\zeta(\cN),\zeta(\cM))$.
    \end{proposition}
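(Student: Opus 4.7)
The plan is to reduce the claim to the standard fact that the rank of a composition of linear maps is bounded by the minimum of the individual ranks. First I would invoke the characterization established just above, namely that $d_{\rm s}(\cN)$ equals the rank of $\cN$ viewed as a linear map on $\cL_{A}$ (equivalently, the rank of the matrix $M_\cN = \sum_k E_k \otimes E_k$). Since $\cN \circ \cM$ is a composition of linear maps between spaces of the same dimension $d^2$, its associated matrix factors as $M_{\cN} M_{\cM}$, and hence
\begin{align}
d_{\rm s}(\cN \circ \cM) \;=\; \operatorname{rank}(M_\cN M_\cM) \;\leq\; \min\!\bigl(\operatorname{rank}(M_\cN),\operatorname{rank}(M_\cM)\bigr) \;=\; \min\!\bigl(d_{\rm s}(\cN), d_{\rm s}(\cM)\bigr).
\end{align}

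Next I would translate this rank inequality into an inequality on $\zeta$ by applying the monotonically decreasing map $x \mapsto \log(d^2/x)$. Since both sides use the same ambient dimension $d$ (the channels act on spaces of equal input/output dimension, as assumed throughout), this gives
\begin{align}
\zeta(\cN\circ\cM) \;=\; \log\frac{d^2}{d_{\rm s}(\cN\circ\cM)} \;\geq\; \log\frac{d^2}{\min(d_{\rm s}(\cN), d_{\rm s}(\cM))} \;=\; \max\!\bigl(\zeta(\cN), \zeta(\cM)\bigr),
\end{align}
which is the desired data processing inequality.

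There is essentially no obstacle here beyond checking that the matrix representation from the preceding subsection composes correctly under channel composition. Concretely, if $\cN(X) = \sum_k E_k X E_k^\dagger$ and $\cM(X) = \sum_\ell F_\ell X F_\ell^\dagger$, then $(\cN\circ\cM)(X) = \sum_{k,\ell} (E_k F_\ell) X (E_k F_\ell)^\dagger$, and the corresponding matrix is $\sum_{k,\ell} (E_k F_\ell) \otimes (E_k F_\ell) = \bigl(\sum_k E_k\otimes E_k\bigr)\bigl(\sum_\ell F_\ell\otimes F_\ell\bigr) = M_\cN M_\cM$, confirming the factorization used above. The operational reading is immediate: shadow information discarded by $\cM$ can never be restored by a subsequent channel $\cN$, and likewise the image of $\cN$ restricts what $\cN\circ\cM$ can preserve.
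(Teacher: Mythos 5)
Your proof is correct and takes essentially the same route as the paper's: both reduce the claim to the fact that the rank of a composition of linear maps is at most the minimum of the individual ranks, and then apply the decreasing map $x\mapsto\log(d^2/x)$. The only (cosmetic) difference is that the paper argues abstractly via the fundamental theorem of linear maps applied to images and null spaces, while you work with the explicit matrix representation and verify the factorization $M_{\cN\circ\cM}=M_{\cN}M_{\cM}$, which is a valid and correctly checked alternative phrasing of the same rank inequality.
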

    \begin{proof}
    \update{If we write the channels $\cM$ and $\cN$ in their matrix representations $M_\cM$ and $M_\cN$, the the matrix representation of the composite channel $\cN\circ\cM$ is $M_\cN M_\cM$, which is known to have a rank ${\rm rank}(M_\cN M_\cM) \leq \min({\rm rank}(M_\cN), {\rm rank}(M_\cM))$. This is equivalent to $d_{\rm s}(\cN\circ\cM) \leq \min(d_{\rm s}(\cN), d_{\rm s}(\cM))$, which implies $\zeta(\cN\circ\cM) \geq \max(\zeta(\cN),\zeta(\cM))$.}
    \end{proof}
\end{enumerate}

As the shadow destructivity derived from Theorem~\ref{thm:applicability} can be computed from the matrix rank of $M_\cN$, Theorem~\ref{thm:applicability} endows the matrix rank of $M_\cN$ an operational meaning in shadow information recoverability, where a higher rank indicates that more shadow information is recoverable. It is well known that a matrix is invertible if and only if it has full rank. Since full rank corresponds to the total recoverability of shadow information, invertibility of a channel is equivalent to its total shadow information recoverability.

\subsection{Quantifying the Cost of Retrieving Shadow Information}
By saying that a quantum channel preserves some shadow information, we mean this information can be retrieved after a state is corrupted by this channel. Though the effective shadow dimension of a quantum channel captures its overall capability of preserving shadow information, a piece of shadow information being preserved does not mean that retrieving it is cost-free, and different pieces of information can have different retrieving cost. In the following part, we derive a measure called \textit{retrieving cost} from the number of copies of the corrupted state required to estimate the desired shadow information within an acceptable accuracy.

Note that the retriever $\cD$ is an HPTS map but not necessarily CPTP, i.e., not necessarily a quantum channel. While such a map $\cD$ is not physically implementable, we can simulate its action by decomposing it as a linear combination of CPTP maps: $\cD = \sum_j c_j\cD_j$, and utilizing Eq.~\eqref{eq:linear_comb_of_qchannel}. As Lemma~\ref{lemma:hpts} implies, a linear map is physically simulatable in this framework if and only if it is HPTS.

\begin{lemma}\label{lemma:hpts}
A linear map $\cD$ is HPTS if and only if it can be written as a linear combination of quantum channels $\{\cD_j\}$ with real numbers $\{c_j\}$: $\cD = \sum_j c_j\cD_j$.
Furthermore, for any HPTS map $\cD$, there exist two quantum channels $\cD_1, \cD_2$ with real numbers $c_1, c_2$ such that 
\begin{equation}\label{eq:retriever_2_terms}
    \cD = c_1\cD_1 + c_2\cD_2.
\end{equation}
\end{lemma}
\begin{proof}
We first prove that a linear combination of quantum channels is HPTS. The \Choi matrix of such a linear combination $\cD = \sum_j c_j\cD_j$ is $J_\cD = \sum_j c_jJ_{\cD_j}$, where $J_{\cD_j}\geq0$ and $\tr_B[J_{\cD_j}] = I_A$ for each $j$. As a linear combination of Hermitian operators is also Hermitian, $\cD$ is Hermitian-preserving. Also noting that $\tr_B[J_\cD] = \sum_j c_j\tr_B[J_{\cD_j}] = \sum_j c_j I_A$, we have $\tr_B[J_\cD] = c I_A$, where $c = \sum_j c_j$. Hence, $\cD$ is HPTS.

For the other direction, if a map $\cD$ is HPTS, its \Choi matrix $J_\cD$ satisfies $J_\cD$ being Hermitian and $\tr_B[J_\cD] = c I_A$ for some real coefficient $c$. If $c\neq0$, $\cD/c$ is an HPTP map and there exist two CPTP maps $\cD_1,\cD_2$ and two real numbers $c_1\geq0, c_2\leq0$ such that $\cD/c = c_1\cD_1 + c_2\cD_2$~\cite{Jiang2021physical}. Then we have $\cD = c\cdot c_1\cD_1 + c\cdot c_2\cD_2$. On the other hand, if $c=0$, $\cD + \id$ is an HPTP map and there still exist two CPTP maps $\cD_1,\cD_2$ and two real numbers $c_1\geq0, c_2\leq0$ such that $\cD + \id = c_1\cD_1 + c_2\cD_2$. Hence, $\cD$ can be written as $\cD = c_1\cD_1 + (c_2-1)\cD_2'$, where $\cD_2' = c_2\cD_2/(c_2-1) - \id/(c_2-1)$ is a quantum channel.
\end{proof}

Now let $\cD$ be an HPTS map whose decomposition is $\sum_j c_j\cD_j$.
With this decomposition, we can simulate its action on the expectation value by the probabilistic sampling method prescribed by Eq.~\eqref{eq:linear_comb_of_qchannel}. Specifically, in the $s$-th round of total S times of sampling, we first sample a quantum channel $\cD^{(s)}$ from $\{\cD_j\}$ with probabilities $\{|c_j|/\gamma\}$, where $\gamma = \sum_j |c_j|$, and apply it to a copy of the corrupted state to obtain $\cD^{(s)}\circ\cN(\rho)$. Then, we measure this state in the eigenbasis $\{\proj{\psi_j}\}_j$ of the given observable $O = \sum_j o_j\proj{\psi_j}$, where $o_j \in [-1,1]$, and obtain the measurement value $o^{(s)}$. After $S$ rounds of sampling, we attain an estimation for the expectation value $\tr[\rho O]$ as
\begin{align}\label{eq:expectation_value}
    \xi = \frac{\gamma}{S} \sum_{s=1}^S \sgn(c^{(s)})o^{(s)}.
\end{align}
 By the Hoeffding's inequality~\cite{hoeffding1994probability}, the number of rounds required to obtain the estimation within an error $\varepsilon$ with a probability no less than $1-\delta$ is 
\begin{equation}\label{eq:sampling_times}
    S = 2\gamma^2 \log(2/\delta) / \varepsilon^2.
\end{equation}

It can be seen that the required number of copies $S$ of the corrupted state is directly related to $\gamma$, the sum of absolute values of all $c_j$. Hence, $\gamma$ can be used to characterize the cost of simulating the HPTS map $\cD$. It is desirable to find a decomposition of $\cD$ making $\gamma$ as small as possible, and we denote the minimum possible value of $\gamma$ as $\gamma_{\min}(\cD)$. In Ref.~\cite{Jiang2021physical}, the authors define the logarithm of $\gamma_{\min}(\cD)$ as the physical implementability of an HPTP map $\cD$.

Naturally, given a channel $\cN$, we define the \emph{retrieving cost} with respect to an observable $O$ to be
\begin{align}
    \gamma_O(\cN) = \min \{\gamma_{\min}(\cD) \vert \cD\text{ is HPTS},~\cN^\dagger\circ\cD^\dagger(O) = O\},
\end{align}
which is the $\gamma_{\min}(\cD)$ minimized over $\cD$ that can retrieve the shadow information queried by $O$ from the noisy channel $\cN$.

Notably, Lemma~\ref{lemma:hpts} states that any HPTS map can be decomposed as a linear combination of two quantum channels (Eq. \eqref{eq:retriever_2_terms}), and we show in the Supplemental Material that the minimum retrieving cost can be achieved by such a decomposition. This is true because all channels with non-negative coefficients can be grouped into a single channel with a non-negative coefficient, and we can do the same thing for all channels with negative coefficients. The existence of such an optimal decomposition with two CPTP maps leads to an efficient way to compute this measure through a semidefinite program (SDP) in terms of linear maps' \Choi matrices:
\begin{subequations}\label{eq:sdp_total}
\begin{align}
\gamma_{O}(\cN) = \min &\; c_1 + c_2 \\
\text{s.t.} &\; J_\cD \equiv J_{\cD_1} - J_{\cD_2} \\
            &\; J_{\cD_1} \geq 0,\; J_{\cD_2} \geq 0 \label{eq:sdp_cp} \\ 
            &\; \tr_{A_o'}[J_{\cD_{1{A'_iA'_o}}}] = c_1I_{A'_i},\; \tr_{A'_o}[J_{\cD_{2{A'_iA'_o}}}] = c_2I_{A'_i} \label{eq:sdp_tp} \\
            &\; J_{\cM_{A_iA'_o}} \equiv \tr_{A_oA'_i}[(J_{\cN_{A_oA'_i}}^T\ox I_{A_iA'_o})(J_{\id_{A_iA_o}}\ox J_{\cD_{A'_iA'_o}})] \\
            &\; \tr_{A'_o}[(I_{A_i}\ox O_{A'_o}^T) J_{\cM_{A_iA'_o}}^T] = O_{A_i}, \label{eq:sdp_min_req}
\end{align}
\end{subequations}
where $J_\id$ is the \Choi matrix of an identity channel.
Eq.~\eqref{eq:sdp_cp} corresponds to the condition that $\cD_1$ and $\cD_2$ are CP, and Eq.~\eqref{eq:sdp_tp} requires them to be TP. Eq.~\eqref{eq:sdp_min_req} corresponds to the minimum requirement for a valid retriever as given in Eq.~\eqref{eq:min_req_on_retriever}.
Note that we can establish approximate versions of Eq.~\eqref{eq:sdp_total} by relaxing the requirement in Eq.~\eqref{eq:sdp_min_req} such that $\tr_{A'_o}[(I_{A_i}\ox O_{A'_o}^T) J_{\cM_{A_iA'_o}}^T]$ is close to $O$ within a certain tolerance.
This relaxation provides a remedy to cases where a perfect retriever does not exist as well as a trade-off between the retrieving cost and the estimation precision.

\subsection{Case Study and Applications in Error Mitigation}
To intuitively understand effective shadow dimension and shadow destructivity, we study two non-invertible channels as examples: $\cN_1(\cdot) = \frac{1}{2}I(\cdot)I + \frac{1}{2}X(\cdot)X$ and $\cN_2(\cdot) = \frac{1}{2}I(\cdot)I + \frac{1}{4}X(\cdot)X + \frac{1}{4}Y(\cdot)Y$. For a quantum state
$\rho =
\begin{pmatrix}
a & b+ci \\ b-ci & 1-a
\end{pmatrix}
$, where $a, b, c$ are real numbers and $i\equiv\sqrt{-1}$, it can be easily verified that
$\cN_1(\rho) =
\begin{pmatrix}
1/2 & b \\ b & 1/2
\end{pmatrix}
$ and
$\cN_2(\rho) =
\begin{pmatrix}
1/2 & (b+ci)/2 \\ (b-ci)/2 & 1/2
\end{pmatrix}
$.
While both channels are non-invertible, it is intuitive to see that channel $\cN_2$ preserves more information than channel $\cN_1$ since the imaginary part of off-diagonal elements is preserved. This intuition coincides with the effective shadow dimension of these two channels, $d_{\rm s}(\cN_2) = 3 > d_{\rm s}(\cN_1) = 2$, which implies that $\cN_2$ preserves one more shadow dimension than $\cN_1$. The shadow destructivity of these channels, $\zeta(\cN_1) = \log (4/2) = 1$ and $\zeta(\cN_2) = \log (4/3) \approx 0.415$, has similar implications.

Now recall the \textit{retrieving cost}, which
quantifies the resources required to recover a certain piece of shadow information. The corresponding optimal retrieving protocol can be considered as a method to mitigate errors, endowing the retrieving cost with a practical meaning in quantum error mitigation~\cite{Temme2017,Endo2018a,Takagi2020,Jiang2021physical,Piveteau2021,Bravyi2020,Maciejewski2020,Czarnik2020,McClean2019,Cai2021,Bultrini2021,Cai2021b,Czarnik2021,Piveteau2021,Lostaglio2021,Takagi2021,Koczor2020,Endo2020,Wang2021a,Suzuki2020}.

\update{In error mitigation, PEC methods are promising to suppress noises in quantum circuits run on near-term quantum computers and return us unbiased estimation of expectation values~\cite{Temme2017,Endo2018a,Takagi2020,Jiang2021physical}. Briefly speaking, the idea of PEC is to decompose the inverse (HPTP) of a noisy channel $\cN$ into a linear combination of physically implementable channels (CPTP), i.e., $\cN^{-1} = \sum_j a_j \cA_j$, with real coefficients $a_j$ and CPTP maps $\cA_j$, where $\sum_j a_j = 1$. The inverse map $\cN^{-1}$ is implemented by sampling the channels $\cA_j$ according to a probability distribution $p(j) = \frac{|a_j|}{\gamma_{\rm con}}$, where $\gamma_{\rm con} = \sum_j |a_j|$. In this way, the unbiased estimation of the expectation value of an observable $O$ with respect to a state $\rho$ is obtained by manipulating multiple copies of the noisy state $\cN(\rho)$:
\begin{equation}
    \gamma_{\rm con}\sum_j p(j)\tr[\cA_j\circ \cN(\rho)O] = \tr[\cN^{-1}\circ \cN(\rho)O] = \tr[\rho O].
\end{equation}
The quantity $\gamma_{\rm con}$ is known as the overhead (also called retrieving cost in the setting of our work) because it captures how many copies of the noisy state are needed to ensure the estimation within the acceptable precision. Therefore, $\gamma_{\rm con}$ is usually used to evaluate the efficiency of an error mitigation protocol.}

\update{Our proposed method can be considered as an approach to mitigating errors, where the retrieving cost corresponds to the overhead of an error mitigation protocol. For a given noisy channel $\cN$ and an observable $O$ of interest, the SDP in Eq.~\eqref{eq:sdp_total} can give us a retriever $\cD$ and its decomposition $\cD = c_1\cD_1 + c_2\cD_2$ with the optimized retrieving cost $\gamma_{\rm pro} = |c_1|+|c_2|$.
The expectation of the estimation made by our method is
\begin{align}
    \gamma_{\rm pro} \left(\frac{|c_1|}{\gamma_{\rm pro}}\tr[\cD_1\circ \cN(\rho)O]+\frac{|c_2|}{\gamma_{\rm pro}}\tr[\cD_2\circ \cN(\rho)O]\right)&= \tr[\cD\circ \cN(\rho)O] \\
    &= \tr[\rho \cN^\dagger\circ\cD^\dagger(O)] \\
    &= \tr[\rho O].
\end{align}
Note that in our method, the retriever $\cD$ is not necessarily the inverse of the noisy channel, i.e., $\cD \neq \cN^{-1}$.}

In the following, we compare the retrieving cost between our proposed protocol $\gamma_{\rm pro}$ and the conventional PEC method $\gamma_{\rm con}$ ~(see, e.g., ~\cite{Temme2017,Endo2018a,Takagi2020,Jiang2021physical,Piveteau2021}), with the assumption that full knowledge of the noise is accessible and the noise is modeled as a channel just before the measurement.

As a concrete example, consider retrieving shadow information $\tr[\rho X]$ from a state $\rho$ corrupted by a GAD channel~\cite{Chirolli2018}, where $X$ is the Pauli $X$ operator. The cost of our method is $\gamma_{\rm pro}=\frac{1}{\sqrt{1-\epsilon}}$, while the cost of the conventional method is $\gamma_{\rm con} = \frac{|1-2p|\epsilon+1}{1-\epsilon}$~\cite{Jiang2021physical}, where $\epsilon$ is the damping factor and $p$ is the temperature indicator associated with the GAD channel. It is obvious to see that $\gamma_{\rm pro} < \gamma_{\rm con}$ for any $0<\epsilon<1$, as shown in Fig. \ref{fig:AD_cost}. Detailed protocols are provided in the Supplemental Material.

\begin{figure}[t]
    \centering
    \includegraphics[width=0.66\textwidth]{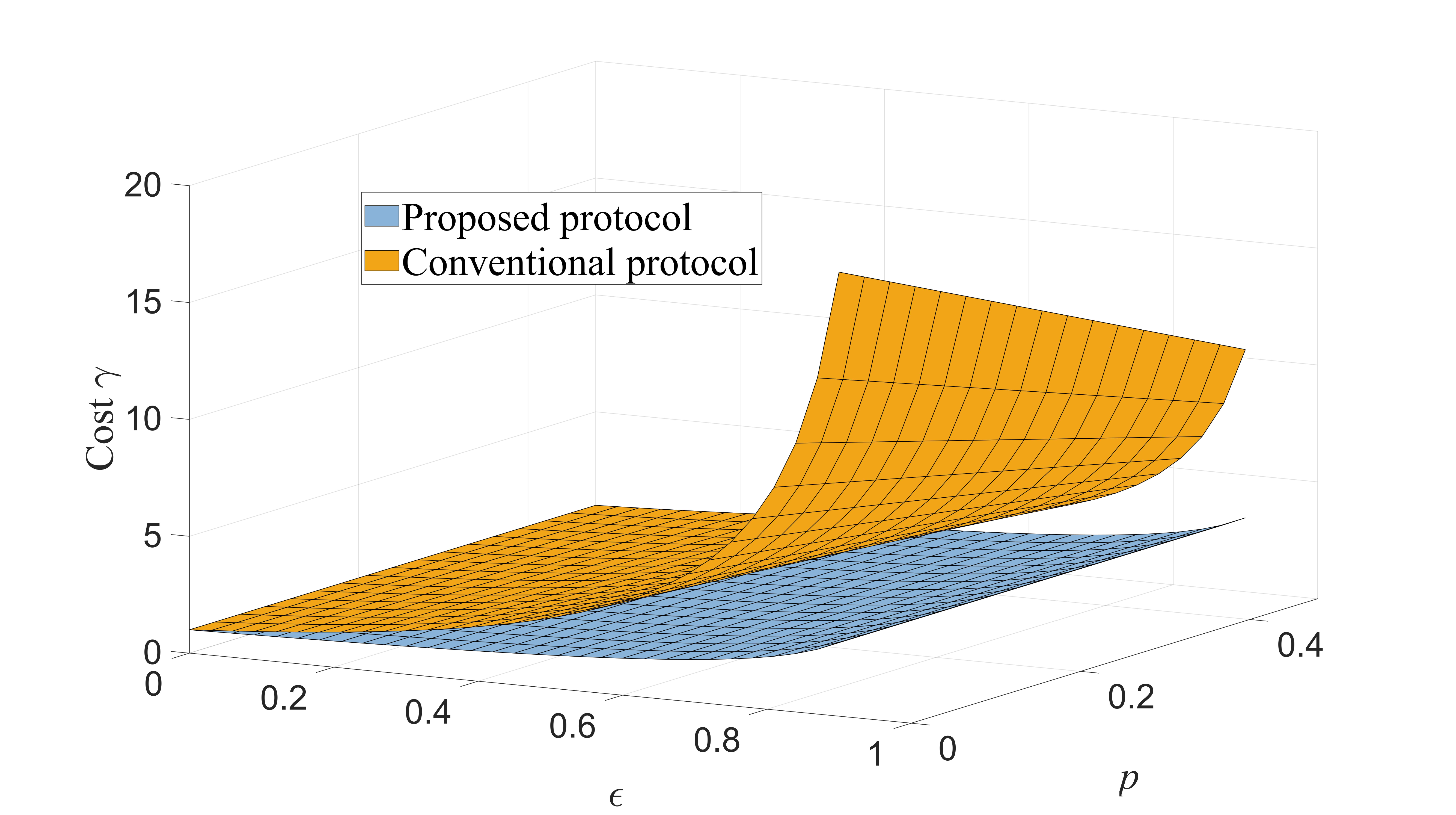}
    \caption{Comparison of the retrieving cost between our proposed protocol and conventional PEC protocol in estimating expectation value $\tr[\rho O]$ from noisy state $\cN(\rho)$. The observable is $O=X$, and $\cN$ is the GAD channel with coefficients $p\in [0, 0.5]$ and $\epsilon \in [0, 0.9]$.}
    \label{fig:AD_cost}
\end{figure}

Now we apply our protocol to ground state energy estimation with VQE.
VQE~\cite{Peruzzo2014} is a promising near-term algorithm for estimating the ground state energy of a Hamiltonian $H$. It aims to minimize the cost function $E = \tr[\proj{\psi(\bm{\theta})} H]$ by tuning parameters $\bm{\theta}$ in a parameterized quantum circuit used to prepare the ansatz state $\ket{\psi(\bm{\theta})}$. In practice, $H$ is decomposed into a linear combination of tensor products of Pauli tensors $\{O_j\}$, i.e., $H=\sum_j h_j O_j$, where each $h_j$ is a real coefficient. The core subroutine of VQE is estimating the expectation value of each term $\tr[\rho O_j]$.

We use Eq.~\eqref{eq:sampling_times} to estimate the numbers of sampling rounds required by VQE to estimate the ground state energies of given molecules when there is depolarizing noise~\cite{Nielsen2010} on each qubit. Table~\ref{table:vqe_cost_comparison} shows the Comparison between our protocol and the conventional protocol \cite{Jiang2021physical}. The retrieving cost of our proposed method is $\gamma_{\rm pro}=\frac{1}{1-\epsilon}$ (see Supplemental Material), while the conventional method is $\gamma_{\rm con}=\frac{1+(1-2/d^2)\epsilon}{1-\epsilon}$, where $\epsilon$ is the noise level, and $d$ is the dimension of the system. It is clear that the number of sampling rounds of our method is smaller than that of the conventional method.

\newcommand{\ra}[1]{\renewcommand{\arraystretch}{#1}}
\setlength{\tabcolsep}{5pt}
\begin{table}[h]
\centering
\ra{1.4}
\begin{tabular}{@{}lccc@{}}
\toprule[1pt]
&  \multicolumn{3}{c}{Molecule} \\
\cmidrule[0.8pt]{2-4}
& H$_2$ & HF & CO$_2$  \\
\midrule[0.5pt]
Proposed &  $1.24\times 10^5$ & $7.84\times 10^{10}$ & $1.28\times 10^{13}$\\
Conventional &  $1.60\times 10^5$ & $1.79\times 10^{11}$ & $1.21\times 10^{14}$\\
\bottomrule[1pt]
\end{tabular}
\caption{Sampling times for estimating the ground state energies of three different molecules under the depolarizing noise with a noise level $\epsilon=0.1$. The precision parameters in Eq.~\eqref{eq:sampling_times} are set to $\varepsilon=\delta=0.01$. For estimating the sampling times, each coefficient $h_j$ of a Hamiltonian $H=\sum_j h_jO_j$ is assumed to be $\max \{|h_j|\}$, and the term where $O_j$ is the identity operator is excluded.}
\label{table:vqe_cost_comparison}
\end{table}

Besides the advantage of having a lower cost, our method has a larger range of applicability. The conventional method implements the retriever as the inverse of the noisy channel. In contrast, it only works for the cases where the noise is invertible. However, our method, employing an observable-adaptive strategy, also works for some cases where the noise is non-invertible. As long as the observable of interest is in the image of the noise's adjoint map, our method can provide an optimal protocol for recovering the corresponding shadow information.

\section{Conclusions}
In this work, we establish two measures from an operational perspective to quantify a quantum channel's influence on shadow information encoded in quantum states, answering the questions posed at the beginning.
Our work delivers a systematic framework of quantifying the information preservativity and destructivity of a quantum channel. Meanwhile it establishes an optimal way of extracting noiseless classical information from noisy states, which could be useful for near-term quantum information processing tasks.

While the shadow destructivity and the retrieving cost capture different aspects of shadow information recoverability, a single unified measure that captures both aspects may be worth further studies.
For future work, it would be interesting to see how the techniques presented in this work can be combined with quantum error correction~\cite{Gottesman2005,Lidar2013,Terhal2015} or quantum error mitigation~\cite{Koczor2020,Cai2021b,Endo2020,Wang2021a,Suzuki2020}. We also expect that our ideas could be applied to near-term quantum tasks or applications on noisy quantum devices~\cite{Preskill2018}.

\section*{Acknowledgements}
X. Z. and B. Z. contributed equally to this work.
We would like to thank Chenghong Zhu for helpful discussions. \update{We would also like to thank the referees for their helpful comments, which greatly improved the readability of the manuscript.}
Part of this work was done when X. Z., B. Z., and Z. X. were research interns at Baidu Research.
\bibliographystyle{unsrtnat}
\bibliography{Em}

\clearpage
\setcounter{subsection}{0}
\setcounter{table}{0}
\setcounter{figure}{0}

\vspace{2cm}
\vspace{2cm}
\begin{center}
{\textbf{\large Supplemental Material for \\Information Recoverability of Noisy Quantum States}}
\end{center}

\renewcommand{\theequation}{S\arabic{equation}}
\renewcommand{\theproposition}{S\arabic{proposition}}
\renewcommand{\theHequation}{Supplement.\theequation}
\renewcommand{\theHproposition}{Supplement.\theproposition}
\setcounter{equation}{0}
\setcounter{section}{0}
\setcounter{proposition}{0}
\section{Adjoint of a Trace-scaling Map}
\renewcommand{\theproposition}{S\arabic{proposition}}
\begin{lemma}\label{lemma:hpts_adjoint}
The adjoint of a trace-scaling map is unit-scaling,
and vice versa. By calling a linear map $\cN$ unit-scaling, we mean it scales the identity operator, i.e., $\cN(I) = pI$ for some real number $p$.
\end{lemma}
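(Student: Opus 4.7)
The plan is to recast both scaling conditions as scalar identities on traces of $\cN(X)$, and then exchange one side for the other via the defining property of the Hilbert--Schmidt adjoint.

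First I would prove that $\cN$ is trace-scaling with parameter $p$ if and only if the scalar identity $\tr[\cN(X)] = p\,\tr[X]$ holds for every $X \in \cL_A$. This follows from the standard reconstruction $\cN(X) = \tr_A[(X^T \ox I_{A'}) J_\cN]$: taking the trace on $A'$ inside yields $\tr[\cN(X)] = \tr\!\big[X^T \tr_{A'}[J_\cN]\big]$, and since transposition is a bijection on $\cL_A$, the condition $\tr_{A'}[J_\cN] = pI_A$ with real $p$ is equivalent to the scalar identity above.

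Next I would invoke the defining property of the adjoint, $\tr[A^\dg \cN(X)] = \tr[\cN^\dg(A)^\dg X]$, and specialize it to $A = I$, which yields
\begin{equation*}
\tr[\cN(X)] \;=\; \tr\!\big[\cN^\dg(I)^\dg\, X\big] \qquad \text{for all } X \in \cL_A.
\end{equation*}
Chaining the two observations, $\cN$ is trace-scaling with parameter $p$ iff $\tr[\cN^\dg(I)^\dg X] = p\,\tr[X]$ for every $X$, iff $\cN^\dg(I)^\dg = pI$, iff $\cN^\dg(I) = \bar{p}\,I = pI$ (the last step using that $p$ is real by definition). This is exactly the statement that $\cN^\dg$ is unit-scaling with the same real parameter $p$. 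The converse direction is obtained by reading the same chain of equivalences right-to-left, noting that $(\cN^\dg)^\dg = \cN$.

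The proof involves no hard step: the only thing to watch out for is bookkeeping of daggers/transposes and the reality of $p$. Since both definitions demand a real scalar, the parameter transfers cleanly between $\cN$ and $\cN^\dg$ and nothing in the argument requires $\cN$ to be Hermitian-preserving, completely positive, or otherwise physical. The proof also has the pleasant byproduct that the scaling factor is preserved under taking adjoints.
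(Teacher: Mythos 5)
Your proof is correct, but it takes a genuinely different route from the paper's. The paper reduces the claim to the known correspondence ``adjoints of trace-preserving maps are unital, and vice versa,'' which forces a case split: for $p\neq 0$ it rescales $\cN$ to the TP map $\cN/p$, and for $p=0$ it patches things by noting $\cN+\id$ is TP and peeling the identity back off at the end. Your argument instead works directly from the defining property of the Hilbert--Schmidt adjoint, $\tr[A^\dg\,\cN(X)]=\tr[\cN^\dg(A)^\dg X]$, specialized to $A=I$, together with the observation that the Choi-matrix condition $\tr_{A'}[J_\cN]=pI_A$ is equivalent to the scalar identity $\tr[\cN(X)]=p\tr[X]$ for all $X$. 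This buys you three things: no case split on $p=0$, no appeal to an external result, and an explicit statement (implicit but not recorded in the paper's proof) that the scaling parameter $p$ is the same for $\cN$ and $\cN^\dg$. The one step worth double-checking is the nondegeneracy argument $\tr[\cN^\dg(I)^\dg X]=p\tr[X]$ for all $X$ implies $\cN^\dg(I)^\dg=pI$ --- this is just nondegeneracy of the trace pairing on $\cL_A$ and is fine; likewise the final conjugation step $\cN^\dg(I)=\bar p\,I=pI$ correctly uses that $p$ is real by definition. The converse via $(\cN^\dg)^\dg=\cN$ is also handled correctly.
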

\begin{proof}
Suppose that $\cN$ is a trace-scaling map such that $\tr[\cN(\cdot)] = p\tr[\cdot]$ for some real number $p$. If $p\neq0$, we can define a TP map $\cM\equiv\cN/p$. It is known that the adjoint of TP maps are unital~\cite{khatri2020principles}, which preserves the identity operator, and vice versa. Thus, $\cM^\dagger$, the adjoint of $\cM$, is a unital map. Then, $\cN^\dagger = p\cM^\dagger$, the adjoint of the trace-scaling map $\cN$, is a unit-scaling map. On the other hand, if $p=0$, then $\cN+\id$ is TP and thus $\cN^\dagger+\id$ is unital, making $\cN^\dagger$ a unit-scaling map. Hence, the adjoint of a trace-scaling map is unit-scaling.

Now suppose that $\cN$ is a unit-scaling map such that $\cN(I) = pI$ for some real number $p$. If $p\neq0$, then $\cM\equiv\cN/p$ is a unital map. Thus, we know its adjoint $\cM^\dagger$ is TP, making $\cN^\dagger = p\cM^\dagger$ trace-scaling. On the other hand, if $p=0$, then $\cN+\id$ is unital and hence its adjoint $\cN^\dagger+\id$ is TP, again making $\cN^\dagger$ trace-scaling. Hence, the adjoint of a unit-scaling map is trace-scaling.
\end{proof}

\section{Retrieving Cost}
Recall that, given a quantum channel $\cN$, the shadow retrieving cost with respect to an observable $O$ is defined as
\begin{align}
    \gamma_O(\cN) = \min \{\gamma_{\min}(\cD) \vert \cD\text{ is HPTS},~\cN^\dagger\circ\cD^\dagger(O) = O\},
\end{align}
where $\gamma_{\min}(\cD)$ is the minimum cost for simulating $\cD$. Here, we give a proof that the minimum cost of simulating an HPTS map can be achieved by a decomposition with two CPTP maps. The same thing has been proved for HPTP maps in Ref.~\cite{Jiang2021physical}, and the proof here is similar.

\renewcommand{\theproposition}{S\arabic{proposition}}
\begin{proposition}
For any HPTS map $\cD$,
\begin{align}
    \gamma_{\min}(\cD) = \min \left\{|c_1| + |c_2| \big\vert \cD = c_1\cD_1 + c_2\cD_2;~\cD_1,\cD_2\text{ are CPTP};~c_1,c_2\in \mathbb{R} \right\}.
\end{align}
\end{proposition}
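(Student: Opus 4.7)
The plan is to prove the two inequalities. One direction is immediate: any two-term decomposition $\cD = c_1\cD_1 + c_2\cD_2$ is a valid decomposition into CPTP maps, so the defining minimum $\gamma_{\min}(\cD)$ (taken over all finite decompositions) is automatically at most $\min\{|c_1|+|c_2|\}$. The content lies in the reverse inequality, showing that no multi-term decomposition can beat a carefully chosen two-term decomposition.

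For the nontrivial direction, I would take an arbitrary decomposition $\cD = \sum_j c_j \cD_j$ with each $\cD_j$ CPTP and regroup the terms by sign. Let $J_+ = \{j : c_j \geq 0\}$ and $J_- = \{j : c_j < 0\}$, and set $\alpha = \sum_{j\in J_+} c_j$ and $\beta = \sum_{j\in J_-} |c_j|$, so that $\alpha, \beta \geq 0$ and $\alpha + \beta = \sum_j |c_j|$. Assuming $\alpha, \beta > 0$, define
\begin{equation}
    \cD_+ = \frac{1}{\alpha}\sum_{j\in J_+} c_j \cD_j, \qquad \cD_- = \frac{1}{\beta}\sum_{j\in J_-} |c_j| \cD_j.
\end{equation}
Each of $\cD_+, \cD_-$ is a convex combination of CPTP maps and is therefore itself CPTP (complete positivity and trace preservation are both preserved by convex combinations, since the Choi matrices add as positive operators and the trace-preserving constraint $\tr_{A'}[J_\cD]=I$ is affine). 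Then $\cD = \alpha\cD_+ - \beta\cD_-$ is a two-term decomposition whose cost $\alpha + \beta$ equals $\sum_j |c_j|$.

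Taking the infimum over all starting decompositions yields $\min\{|c_1|+|c_2|\} \leq \gamma_{\min}(\cD)$, which combined with the easy direction gives equality. I would close with a brief handling of the degenerate cases: if one of $J_\pm$ is empty (say $J_- = \emptyset$, so $\beta = 0$), then $\cD = \alpha \cD_+$ is already a valid two-term decomposition by padding with any CPTP map $\cD_-$ and coefficient $0$, and the cost still matches.

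The main obstacle is essentially conceptual rather than technical: one must be comfortable that collapsing many CPTP summands into one by normalizing their positive coefficients produces another CPTP map, which is a direct consequence of the convexity of the set of CPTP maps. Once that observation is in place, the bookkeeping of signs and the edge-case padding are routine, and the proof closely mirrors the HPTP version in Ref.~\cite{Jiang2021physical} cited in the excerpt.
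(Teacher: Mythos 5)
Your proposal is correct and follows essentially the same route as the paper: both arguments regroup an arbitrary decomposition $\sum_j c_j\cD_j$ by the sign of the coefficients, normalize each group into a single CPTP map via convexity, and observe that the resulting two-term decomposition has the same cost $\sum_j|c_j|$. Your explicit treatment of the degenerate case where one sign class is empty is a minor addition the paper omits, but it does not change the argument.
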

\begin{proof}
Lemma~\ref{lemma:hpts} implies that $\gamma_{\min}(\cD)$ is finite as $\cD$ can be decomposed into a linear combination of two CPTP maps. Suppose $\gamma_{\min}(\cD)$ is achieved by $\cD = \sum_{j=1}^T c'_j\cD'_j$, where $T\geq3$ so that $\gamma_{\min}(\cD) = \sum_{j=1}^T |c'_j|$. Then, it is always possible to construct two CPTP maps $\cD_1,\cD_2$ so that $\cD = c_1\cD_1 + c_2\cD_2$ and $|c_1|+|c_2| = \gamma_{\min}(\cD)$. In particular, let $c_1 = \sum_{j:c'_j\geq0} c'_j$, $\cD_1 = \sum_{j:c'_j\geq0} c'_j/c_1\cdot \cD'_j$, $c_2 = \sum_{j:c'_j<0} c'_j$, and $\cD_2 = \sum_{j:c'_j<0} c'_j/c_2\cdot \cD'_j$. One can check that $\cD_1$ and $\cD_2$ are CPTP and $c_1\cD_1 + c_2\cD_2 = \sum_{j=1}^T c'_j\cD'_j$. Thus, this decomposition of $\cD$ is valid. Plus, $|c_1| + |c_2| = \sum_{j=1}^T |c'_j|$, ensuring that this decomposition achieves $\gamma_{\min}(\cD)$.
\end{proof}

\subsection{Dual SDP for Retrieving Cost}\label{sec:dual_sdp}
For a quantum channel $\cN_{A\to A'}$, the primal SDP characterization of its retrieving cost with respect to an observable $O$ is
\begin{subequations}\label{eq:primal SDP}
\begin{align}
\gamma_{O}(\cN) = \min &\; c_1 + c_2 \\
\text{s.t.} &\; J_\cD \equiv J_{\cD_1} - J_{\cD_2} \\
            &\; J_{\cD_1} \geq 0,\; J_{\cD_2} \geq 0 \\ 
            &\; \tr_{A_o'}[J_{\cD_{1{A'_iA'_o}}}] = c_1I_{A'_i},\; \tr_{A'_o}[J_{\cD_{2{A'_iA'_o}}}] = c_2I_{A'_i} \\
            &\; J_{\cM_{A_iA'_o}} \equiv \tr_{A_oA'_i}[(J_{\cN_{A_oA'_i}}^T\ox I_{A_iA'_o})(J_{\id_{A_iA_o}}\ox J_{\cD_{A'_iA'_o}})] \\
            &\; \tr_{A'_o}[(I_{A_i}\ox O_{A'_o}^T) J_{\cM_{A_iA'_o}}^T] = O_{A_i}.
\end{align}
\end{subequations}
The Lagrange function is
\begin{align}
    L(J_{\cD_1}, J_{\cD_2}, c_1, c_2, M, N, K) &= c_1 + c_2 + \langle M, \tr_{A'_o}[J_{\cD_1}] - c_1I_{A'_i} \rangle + \langle N, \tr_{A'_o}[J_{\cD_2}] - c_2I_{A'_i} \rangle \nonumber\\
    &~\quad + \langle K, \tr_{A'_o}[(I_{A_i}\ox O_{A'_o}^T) J_{\cM}^T] - O\rangle\\
    &= c_1(1 - \tr[M]) + c_2(1 - \tr[N]) + \langle M\ox I, J_{\cD_1} \rangle \nonumber\\
    &~\quad + \langle N\ox I, J_{\cD_2} \rangle - \tr[KO] + \langle K, \tr_{A'_o}[(I_{A_i}\ox O_{A'_o}^T) J_{\cM}^T]\rangle
\end{align}
where $M$, $N$, $K$ are the dual variables. The last term can be expanded as
\begin{align}
    \langle K_{A_i}, \tr_{A'_o}[(I_{A_i}\ox O_{A'_o}^T) J_{\cM}^T]\rangle &= \tr[K_{A_i} \tr_{A'_o}[(I_{A_i}\ox O_{A'_o}^T) J_{\cM_{A_iA'_o}}^T]]\\
    &= \tr[(K_{A_i}\ox O_{A'_o}^T) J_{\cM_{A_iA'_o}}^T]\\
    &= \tr[(K_{A_i}^T\ox O_{A'_o}) J_{\cM_{A_iA'_o}}]\\
    &= \tr[(K_{A_i}^T\ox O_{A'_o}) \tr_{A_oA'_i}[(J_{\cN_{A_oA'_i}}^T\ox I_{A_iA'_o})(J_{\id_{A_iA_o}}\ox J_{\cD_{A'_iA'_o}})]]\\
    &= \tr[(K_{A_i}^T\ox J_{\cN_{A_oA'_i}}^T\ox O_{A'_o}) (J_{\id_{A_iA_o}}\ox J_{\cD_{A'_iA'_o}})]\\
    &= \tr[(K_{A_i}^T\ox J_{\cN_{A_oA'_i}}^T\ox O_{A'_o}) (J_{\id_{A_iA_o}}\ox I_{A'_iA'_o}) (I_{A_iA_o}\ox J_{\cD_{A'_iA'_o}})]\\
    &= \tr[\tr_{A_iA_o}[(K_{A_i}^T\ox J_{\cN_{A_oA'_i}}^T\ox O_{A'_o}) (J_{\id_{A_iA_o}}\ox I_{A'_iA'_o})] J_{\cD_{A'_iA'_o}}]\\
    &= \langle \tr_{A_iA_o}[(K_{A_i}^T\ox J_{\cN_{A_oA'_i}}^T\ox O_{A'_o}) (J_{\id_{A_iA_o}}\ox I_{A'_iA'_o})], J_{\cD_{1 A'_iA'_o}} \rangle \nonumber\\
    &~\quad - \langle \tr_{A_iA_o}[(K_{A_i}^T\ox J_{\cN_{A_oA'_i}}^T\ox O_{A'_o}) (J_{\id_{A_iA_o}}\ox I_{A'_iA'_o})], J_{\cD_{2 A'_iA'_o}} \rangle
\end{align}
Thus, the Lagrange function can be expressed as
\begin{align}
    L(J_{\cD_1}, J_{\cD_2}, c_1, c_2, M, N, K) &= c_1(1 - \tr[M]) + c_2(1 - \tr[N]) - \tr[KO] \nonumber\\
    &~\quad + \langle M\ox I + \tr_{A_iA_o}[(K_{A_i}^T\ox J_{\cN_{A_oA'_i}}^T\ox O_{A'_o}) (J_{\id_{A_iA_o}}\ox I_{A'_iA'_o})], J_{\cD_1} \rangle \nonumber\\
    &~\quad + \langle N\ox I - \tr_{A_iA_o}[(K_{A_i}^T\ox J_{\cN_{A_oA'_i}}^T\ox O_{A'_o}) (J_{\id_{A_iA_o}}\ox I_{A'_iA'_o})], J_{\cD_2} \rangle
\end{align}
The corresponding Lagrange dual function is
\begin{align}
    g(M, N, K) = \inf_{J_{\cD_1}\geq0, J_{\cD_2}\geq0} L(J_{\cD_1}, J_{\cD_2}, M, N, K)
\end{align}
For $J_{\cD_1}\geq0$ and $J_{\cD_2}\geq0$, it must hold that $\tr[M]\leq1, \tr[N]\leq1$,
\begin{align}
    M_{A'_i}\ox I_{A'_o} + \tr_{A_iA_o}[(K_{A_i}^T\ox J_{\cN_{A_oA'_i}}^T\ox O_{A'_o}) (J_{\id_{A_iA_o}}\ox I_{A'_iA'_o})] &\geq 0, \text{ and}\\
    N_{A'_i}\ox I_{A'_o} - \tr_{A_iA_o}[(K_{A_i}^T\ox J_{\cN_{A_oA'_i}}^T\ox O_{A'_o}) (J_{\id_{A_iA_o}}\ox I_{A'_iA'_o})] &\geq 0.
\end{align}
Thus, we arrive at the following dual SDP:
\begin{subequations}\label{eq:dual_sdp}
\begin{align}
\gamma_{O}(\cN) = \max &\; - \tr[KO] \\
\text{s.t.} &\; \tr[M]\leq1,\; \tr[N]\leq1\\
            &\; M_{A'_i}\ox I_{A'_o} + \tr_{A_iA_o}[(K_{A_i}^T\ox J_{\cN_{A_oA'_i}}^T\ox O_{A'_o}) (J_{\id_{A_iA_o}}\ox I_{A'_iA'_o})] \geq 0\\
            &\; N_{A'_i}\ox I_{A'_o} - \tr_{A_iA_o}[(K_{A_i}^T\ox J_{\cN_{A_oA'_i}}^T\ox O_{A'_o}) (J_{\id_{A_iA_o}}\ox I_{A'_iA'_o})] \geq 0.
\end{align}
\end{subequations}

\section{Retrieving Cost for GAD channel}\label{sec:gad_channel}
A GAD channel can be used to describe the energy dissipation to the environment at finite temperature~\cite{Chirolli2018}. It is one of the realistic sources of noise in superconducting quantum computing. For single-qubit cases, a GAD channel can be characterized by the following Kraus operators:
\begin{align}
    E_0=\sqrt{p}\left(\begin{array}{cc}
       1  & 0 \\
        0 & \sqrt{1-\epsilon}
    \end{array}\right)&, E_1=\sqrt{p}\left(\begin{array}{cc}
       0  & \sqrt{\epsilon} \\
        0 & 0
    \end{array}\right),\\
    E_2=\sqrt{1-p}\left(\begin{array}{cc}
       \sqrt{1-\epsilon}  & 0 \\
        0 & 1
    \end{array}\right)&,
    E_3=\sqrt{1-p}\left(\begin{array}{cc}
       0  & 0 \\
        \sqrt{\epsilon} & 0
    \end{array}\right),
\end{align} 
where $\epsilon$ is the damping factor and $p$ is the indicator of the temperature of the environment. A quantum state $\rho$ after going through the GAD channel is given by $\cN_{\rm GAD}(\rho)=\sum_{i=0}^3 E_i\rho E_i^\dagger$. 
Note that the amplitude damping channel is a special case of the GAD channel when $p=1$.
A single qubit state
$\rho =
\begin{pmatrix}
\rho_{00} & \rho_{01} \\ \rho_{10} & \rho_{11}
\end{pmatrix}
$
after going through the GAD channel is 
\begin{align}\label{seq:gad_act_on_state}
    \rho'=\cN_{\rm GAD}(\rho) &= \left(\begin{array}{cc}
       (1-\epsilon)\rho_{00}+p\epsilon(\rho_{00}+\rho_{11})  & \sqrt{1-\epsilon}\rho_{01} \\
       \sqrt{1-\epsilon}\rho_{10}  & \rho_{11}+\epsilon\rho_{00}-p\epsilon(\rho_{00}+\rho_{11})
    \end{array}\right),
\end{align}
where $\epsilon$ and $p$ are noise parameters.

\begin{proposition}
Given an observable $O\in\{X,Y\}$ and a GAD channel $\cN$, the minimum cost $\gamma$ to retrieve the information $\tr[\rho O]$ is $\gamma_O(\cN) =\frac{1}{\sqrt{1-\epsilon}}$, and the corresponding retriever $\cD$ can be written in the form of a Choi matrix
$$
J_{\cD} = c_1 J_{\cD_1} + c_2 J_{\cD_2},
$$
where $c_1 = -c_2 = \frac{1}{2\sqrt{1-\epsilon}}$, $J_{\mathcal{D}_1}  = \frac{1}{2}O^T\otimes O+\frac{1}{2}I\otimes I$, and $J_{\mathcal{D}_2} = -\frac{1}{2}O^T\otimes O+\frac{1}{2}I\otimes I$.
\end{proposition}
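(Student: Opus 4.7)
The plan is to split the argument into (i) verifying that the proposed $(c_1,c_2,\cD_1,\cD_2)$ is primal-feasible in the SDP~\eqref{eq:primal SDP}, yielding the upper bound $\gamma_O(\cN)\leq 1/\sqrt{1-\epsilon}$, and (ii) matching this by exhibiting a dual-feasible point in~\eqref{eq:dual_sdp} with the same objective value. The whole proof rests on one elementary observation, namely that for single-qubit $O\in\{X,Y\}$ one has $(O^T\otimes O)^2 = I\otimes I$, so $O^T\otimes O$ has spectrum $\{\pm 1\}$, together with the identity $\cN^\dagger(O)=\sqrt{1-\epsilon}\,O$ for the GAD channel.

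For step (i), I first check that $J_{\cD_1}$ and $J_{\cD_2}$ are Choi matrices of CPTP maps. Positive semidefiniteness of $\tfrac12 I\otimes I\pm\tfrac12 O^T\otimes O$ is immediate from the spectral observation above (spectrum $\{0,1\}$). The trace-preserving condition $\tr_{A'_o}J_{\cD_i}=I$ follows from $\tr O=0$ and $\tr I=2$. Forming $J_\cD = c_1 J_{\cD_1}+c_2 J_{\cD_2}=\tfrac{1}{2\sqrt{1-\epsilon}}\,O^T\otimes O$, the corresponding HPTS map acts as $\cD(\rho)=\tfrac{1}{2\sqrt{1-\epsilon}}\tr[O\rho]\,O$, hence $\cD^\dagger(O)=\tfrac{1}{\sqrt{1-\epsilon}}O$ using $\tr O^2=2$. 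I then verify $\cN^\dagger(O)=\sqrt{1-\epsilon}\,O$ for $O\in\{X,Y\}$ by reading off the Bloch-vector action of the GAD channel from Eq.~\eqref{seq:gad_act_on_state}, which gives $r'_x=\sqrt{1-\epsilon}\,r_x$ and $r'_y=\sqrt{1-\epsilon}\,r_y$. Combining these gives $\cN^\dagger\circ\cD^\dagger(O)=O$, so the decomposition witnesses $\gamma_O(\cN)\leq |c_1|+|c_2|=1/\sqrt{1-\epsilon}$.

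For step (ii), guided by $\cN^\dagger(O)=\sqrt{1-\epsilon}\,O$, I set the dual variable $K=-\tfrac{1}{2\sqrt{1-\epsilon}}O$, which yields the dual objective $-\tr[KO]=1/\sqrt{1-\epsilon}$. The key sub-step is simplifying the tensor block $\tr_{A_iA_o}[(K^T\otimes J_\cN^T\otimes O)(J_I\otimes I)]$ using the contraction identity for $J_I$ (which implements the transpose) together with the relation $\tr_{A_i}[(K^T\otimes I)J_\cN^T]$ being the Choi representation of $\cN^\dagger(K)$; by our choice of $K$, this collapses into an operator proportional to $\pm\tfrac12 O\otimes O$ on $A'_iA'_o$. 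Choosing $M=N=\tfrac12 I$ gives $\tr M=\tr N=1$, and the two semidefiniteness constraints reduce to $\tfrac12 I\otimes I\pm\tfrac12 O\otimes O\geq 0$, which again holds by the spectrum-$\{0,1\}$ observation. Weak duality then forces $\gamma_O(\cN)\geq 1/\sqrt{1-\epsilon}$, matching the upper bound. The main obstacle I anticipate is precisely this index bookkeeping in step (ii) across the four subsystems $A_iA_oA'_iA'_o$; once the partial trace against $J_I$ has been executed carefully and the dependence on $\cN$ has been packaged into the scalar $\sqrt{1-\epsilon}$, everything else is spectral and one-line.
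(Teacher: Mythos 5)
Your proposal is correct and follows essentially the same route as the paper: the same primal certificate $J_{\cD}=\tfrac{1}{2\sqrt{1-\epsilon}}\,O^{T}\otimes O$ for the upper bound and the same dual certificate $M=N=\tfrac12 I$, $K=-\tfrac{1}{2\sqrt{1-\epsilon}}O$ for the matching lower bound, with positivity in both directions resting on $(O^{T}\otimes O)^{2}=I\otimes I$. The only cosmetic difference is that you verify primal feasibility in the Heisenberg picture via $\cN^{\dagger}(O)=\sqrt{1-\epsilon}\,O$ and $\cD^{\dagger}(O)=\tfrac{1}{\sqrt{1-\epsilon}}O$, whereas the paper computes $\tr[\cD\circ\cN(\rho)\,O]$ directly; these are equivalent.
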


\begin{proof}
First, we are going to prove  $\gamma_O(\cN)\le\frac{1}{\sqrt{1-\epsilon}}$ using SDP~\eqref{eq:primal SDP}.
We show that the retriever $\cD$ above is a feasible solution with a cost of $\frac{1}{\sqrt{1-\epsilon}}$. So \update{to} be specific, we have
\begin{align}
    \cD\circ\cN(\rho) &= \cD(\rho')\\
    &= \tr_A[(\rho'^{T}\ox I)\cdot J_{\cD}] \\
    &= \frac{1}{2\sqrt{1-\epsilon}}\tr_A[(\rho'^{T}\ox I)(O^T\ox O)]\\
    &= \frac{1}{2\sqrt{1-\epsilon}} \tr_A[\rho'^T O^T\ox IO]\\
    &= \frac{1}{2\sqrt{1-\epsilon}} O \tr[\rho' O].
\end{align}
This means that
\begin{align}
    \tr[\cD \circ \cN(\rho)O] &= \frac{1}{2\sqrt{1-\epsilon}} \tr[O^2]\tr[\rho' O]\\
    &= \frac{1}{\sqrt{1-\epsilon}} \tr[\rho' O] \\
    &= \tr[O\rho],
\end{align}
where the second equality follows from the fact that $O^2 = I$ for $O\in\{X,Y\}$\update{, and the third equality can be verified with direct calculation using Eq.~\eqref{seq:gad_act_on_state}}.
Here, we have proven that the retriever $\cD$ is a feasible solution to retrieve information with cost $\frac{1}{\sqrt{1-\epsilon}}$, implying that $\gamma_O(\cN) \leq |c_1|+|c_2| = \frac{1}{\sqrt{1-\epsilon}}$.

Second, we use the dual SDP~\eqref{eq:dual_sdp} to show the cost $\gamma_O(\cN)\ge \frac{1}{\sqrt{1-\epsilon}}$. We show that $\{M,N,K\}$ is a feasible solution to the dual problem, where $M=N=\frac{I}{2}$ and $K=-\frac{1}{2\sqrt{1-\epsilon}}O$.
To be specific, we have
\begin{align}
    &\tr_{A_iA_o}[(K^T_{A_i}\otimes J^T_{\cN_{A_oA'_i}}\otimes O_{A'_o})(J_{\id_{A_iA_o}}\otimes I_{A'_iA'_o})]\nonumber\\
    &= \tr_{A_iA_o}[(-\frac{1}{2\sqrt{1-\epsilon}}O\otimes\sum_{i,j}\ket{j}\bra{i}\otimes\cN^T(\ket{i}\bra{j})\otimes O)(\sum_{m,n}\ket{m}\bra{n}\otimes\ket{m}\bra{n}\otimes I\otimes I)]\\
    &= -\frac{1}{2\sqrt{1-\epsilon}} \tr_{A_iA_o}[\sum_{i,j,m,n}O\ket{m}\bra{n}\otimes\ket{j}\bra{i}m\rangle\bra{n}\otimes\cN^T(\ket{i}\bra{j})\otimes O]\\
    &= -\frac{1}{2\sqrt{1-\epsilon}} \sum_{i,j,n}\tr_{A_iA_o}[O\ket{i}\bra{n}\otimes\ket{j}\bra{n}\otimes\cN^T(\ket{i}\bra{j})\otimes O]\\
    &= -\frac{1}{2\sqrt{1-\epsilon}}\sum_{i,j,n}\tr[O\ket{i}\bra{n}]\tr[\ket{j}\bra{n}] \cN^T(\ket{i}\bra{j})\otimes O\\
    &= -\frac{1}{2\sqrt{1-\epsilon}} \sum_{i,j}O_{ji}\cN^T(\ket{i}\bra{j})\otimes O,
\end{align}
where $O_{ji} \equiv \bra{j}O\ket{i}$. Since $O\in\{X,Y\}$ only has non-zero elements on the anti-diagonal, we have
\begin{align}
    &\tr_{A_iA_o}[(K^T_{A_i}\otimes J^T_{\cN_{A_oA'_i}}\otimes O_{A'_o})(J_{\id_{A_iA_o}}\otimes I_{A'_iA'_o})]\nonumber\\
    &= -\frac{1}{2\sqrt{1-\epsilon}}(O_{10}\cN^T(\ket{0}\bra{1})+O_{01}\cN^T(\ket{1}\bra{0}))\otimes O\\
    &= -\frac{\sqrt{1-\epsilon}}{2\sqrt{1-\epsilon}}(O_{10}\ketbra{1}{0}+O_{01}\ketbra{0}{1})\otimes O\\
    &= -\frac{1}{2}O\otimes O,
\end{align}
where the second equality follows from Eq.~\eqref{seq:gad_act_on_state}. This means that
\begin{align}
    M_{A'_i}\ox I_{A'_o} + \tr_{A_iA_o}[(K_{A_i}^T\ox J_{\cN_{A_oA'_i}}^T\ox O_{A'_o}) (J_{\id_{A_iA_o}}\ox I_{A'_iA'_o})] &= \frac{1}{2}I\otimes I-\frac{1}{2}O\otimes O\geq 0,\\
    N_{A'_i}\ox I_{A'_o} - \tr_{A_iA_o}[(K_{A_i}^T\ox J_{\cN_{A_oA'_i}}^T\ox O_{A'_o}) (J_{\id_{A_iA_o}}\ox I_{A'_iA'_o})] &= \frac{1}{2}I\otimes I+\frac{1}{2}O\otimes O\geq 0.
\end{align}
Thus, $\{M,N,K\}$ is a feasible solution to the dual SDP~\eqref{eq:dual_sdp}, which means that $\gamma_O(\cN) \ge -\tr[KO] =\frac{1}{\sqrt{1-\epsilon}}$. Combining this with the primal part, we conclude that $\gamma_O(\cN) =\frac{1}{\sqrt{1-\epsilon}}$ for $O\in\{X,Y\}$ and $\cN$ being a single-qubit GAD channel.
\end{proof}

From the proof, we also know that the above retriever $\cD$ is optimal.
Moreover, since the Choi matrices of $\cD_1$ and $\cD_2$ are given already, it is trivial to derive the corresponding Kraus operators, which are $E_{\cD_1}=\left\{\sqrt{\frac{1}{2}}E_i \middle| E_i\in\{I,X,Y,Z\}, E_iO=OE_i\right\}$ and  $E_{\cD_2}=\left\{\sqrt{\frac{1}{2}}E_i \middle| E_i\in\{I,X,Y,Z\}, E_iO=-OE_i\right\}$.

It is interesting to note that there is a connection between the retrieving cost and the spectral properties of the noisy channel and the observable of interest. Let
$O = \begin{pmatrix}
O_{00} & O_{01}\\
O_{10} & O_{11}
\end{pmatrix}$
be a single-qubit observable. Then,
\begin{align}
    \cN_{\rm GAD}^\dagger(O) = \sum_{i=0}^{3} E_i^\dagger O E_i =
    \begin{pmatrix}
        O_{00} - \epsilon(1-p)(O_{00}-O_{11}) & \sqrt{1-\epsilon}O_{01}\\
        \sqrt{1-\epsilon}O_{10} & O_{11} + \epsilon p(O_{00}-O_{11})
    \end{pmatrix}.
\end{align}
For $O\in{X,Y}$, we have $\cN_{\rm GAD}^\dagger(O) = \sqrt{1-\epsilon}O$ and hence $\tr[\cN_{\rm GAD}(\rho) O] = \tr[\rho \cN_{\rm GAD}^\dagger(O)] = \sqrt{1-\epsilon} \tr[\rho O].$ Thus, we need a retriever to scales the expectation value back, and such a retriever corresponds to a retrieving cost of $1/\sqrt{1-\epsilon}$, which aligns with the above proposition. The same phenomena are observed for the mixed Pauli noises with Pauli observables (see Proposition~\ref{props_mixed_pauli}).

\begin{proposition}
Given an observable $O=Z$ and a GAD channel $\cN$, the minimum cost $\gamma_O(\cN)$ to retrieve the information $\tr[\rho O]$ is $\gamma_O(\cN) = \frac{|1-2p|\epsilon+1}{1-\epsilon}$, and the corresponding retriever $\cD$ can be written in the form of a Choi matrix
$$
J_{\cD} = c_1 J_{\cD_1} + c_2 J_{\cD_1},
$$
where $c_1=-c_2=\frac{|1-2p|\epsilon+1}{2(1-\epsilon)}$, and \begin{align}
    J_{\cD_1} &= \frac{1}{2(|1-2p|\epsilon+1)}Z\ox Z + \frac{\epsilon(1-2p)\epsilon}{2(|1-2p|\epsilon+1)}I\ox Z+\frac{1}{2}I\ox I\\
    J_{\cD_2} &= -\frac{1}{2(|1-2p|\epsilon+1)}Z\ox Z - \frac{\epsilon(1-2p)}{2(|1-2p|\epsilon+1)}I\ox Z+\frac{1}{2}I\ox I.
\end{align}
\end{proposition}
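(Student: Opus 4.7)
The plan is to mirror the two-sided SDP argument used in the $O\in\{X,Y\}$ case: establish the value $\gamma_O(\cN)\le \frac{|1-2p|\epsilon+1}{1-\epsilon}$ by exhibiting the stated decomposition as a feasible point of the primal SDP~\eqref{eq:primal SDP}, and then certify the matching lower bound by constructing a feasible dual point in~\eqref{eq:dual_sdp} with the same objective value.

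For the upper bound, the first step is to check that the candidate $J_{\cD_1},J_{\cD_2}$ really are Choi matrices of quantum channels. Both are diagonal in the computational basis (each summand is a tensor of diagonal Pauli factors), so PSDness reduces to sign-checking the four eigenvalues; here the two sub-cases $p\le 1/2$ (so $|1-2p|=1-2p$) and $p>1/2$ (so $|1-2p|=2p-1$) have to be carried out separately, and in each one the eigenvalues collapse to the manifestly nonnegative list $(1,0,\tfrac{x}{x+1},\tfrac{1}{x+1})$ (up to a permutation) where $x=|1-2p|\epsilon$. Partial tracing gives $\tr_{A'_o}J_{\cD_i}=I_{A'_i}$ in both cases, so $\cD_1,\cD_2$ are CPTP. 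Next I compute $\cN^\dagger(Z)$ directly from Eq.~\eqref{seq:gad_act_on_state} via $\tr[Z\,\cN(\rho)]=(1-\epsilon)\tr[Z\rho]+(2p-1)\epsilon$, so that $\cN^\dagger(Z)=(1-\epsilon)Z+(2p-1)\epsilon\, I$. Since $c_1J_{\cD_1}+c_2J_{\cD_2}=\frac{1}{2(1-\epsilon)}Z\otimes Z+\frac{(1-2p)\epsilon}{2(1-\epsilon)}I\otimes Z$, the induced retriever acts as $\cD^\dagger(Z)=\frac{1}{1-\epsilon}Z+\frac{(1-2p)\epsilon}{1-\epsilon}I$, and substituting this into $\cN^\dagger\circ\cD^\dagger(Z)$ cancels both constant terms and returns $Z$, verifying Eq.~\eqref{eq:min_req_on_retriever}. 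The primal objective is $c_1+c_2=2c_1=\frac{|1-2p|\epsilon+1}{1-\epsilon}$.

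For the lower bound I look for a dual triple $(M,N,K)$ of the form $M=N=\tfrac{1}{2}I$ (saturating $\tr M,\tr N\le 1$) and $K=\alpha Z+\beta I$ for scalars to be fixed. Using the single-qubit GAD Choi matrix $J_\cN$ and $O=Z$, the term $T\equiv\tr_{A_iA_o}[(K^T\otimes J_\cN^T\otimes Z)(J_I\otimes I)]$ can be expanded exactly as in the $O\in\{X,Y\}$ proof, except that now $Z$ is diagonal, so the surviving contributions come from the diagonal Kraus pieces of $\cN$, and $T$ becomes a $2\times 2$-block diagonal in $Z\otimes Z$ and $I\otimes Z$. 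Choosing $\alpha=-\frac{1}{2(1-\epsilon)}$ makes $-\tr[KO]=-2\alpha=\frac{|1-2p|\epsilon+1}{1-\epsilon}$ only once $\beta$ is pinned so that both PSD constraints $\tfrac12 I\otimes I\pm T\succeq 0$ are tight at eigenvalue zero; I expect $\beta=-\frac{(2p-1)\epsilon}{2(1-\epsilon)}$, which shifts $T$ by an $I\otimes Z$ component so that after the $\pm$ signs the resulting diagonal matrices have entries in $\{0,\tfrac{|1-2p|\epsilon+1}{1-\epsilon}\}$ rescaled by $1/2$, hence nonnegative. Here is where I will have to be especially careful, because the objective already equals $-2\alpha$ regardless of $\beta$; the role of $\beta$ is purely to make both PSD constraints simultaneously feasible, and this is exactly the step that forces the factor $|1-2p|\epsilon+1$ in the denominator of $\alpha$ rather than $1$.

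The main obstacle I anticipate is the interplay between the sign of $1-2p$ and the PSD constraints of the dual: the absolute value in the retrieving cost arises precisely because the two constraints $M\otimes I+T\succeq 0$ and $N\otimes I-T\succeq 0$ bind with opposite orientations, and my proposed $K$ must be tuned so that whichever of the two is tight depends on $\mathrm{sgn}(1-2p)$. Writing out $T$ explicitly in the Pauli basis and checking both cases of the sign will be the most delicate computation; once that is done, strong duality (guaranteed by Slater's condition, since any genuine convex combination of CPTP retrievers in the interior of the feasible set gives a strictly feasible primal point) closes the gap and yields $\gamma_O(\cN)=\frac{|1-2p|\epsilon+1}{1-\epsilon}$.
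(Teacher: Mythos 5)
Your upper-bound argument is sound and, if anything, tidier than the paper's: the paper verifies feasibility in the Schr\"odinger picture by computing $\tr[Z\,\cD\circ\cN(\rho)]$ entrywise, whereas you work in the Heisenberg picture via $\cN^\dagger(Z)=(1-\epsilon)Z+(2p-1)\epsilon I$ and $\cD^\dagger(Z)=\tfrac{1}{1-\epsilon}Z+\tfrac{(1-2p)\epsilon}{1-\epsilon}I$, and you also explicitly check positivity of $J_{\cD_1},J_{\cD_2}$ (eigenvalues $1,0,\tfrac{x}{x+1},\tfrac{1}{x+1}$ with $x=|1-2p|\epsilon$), a step the paper leaves implicit. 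That half of the proposal is correct.

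The lower bound, however, has a genuine gap, and it is not just a matter of "being careful with $\beta$." With your ansatz $K=\alpha Z+\beta I$ the dual objective is $-\tr[KZ]=-2\alpha$, which with $\alpha=-\tfrac{1}{2(1-\epsilon)}$ equals $\tfrac{1}{1-\epsilon}$, not $\tfrac{|1-2p|\epsilon+1}{1-\epsilon}$; your own observation that the objective is independent of $\beta$ already contradicts your claim that tuning $\beta$ will produce the extra factor. Worse, the choice $M=N=\tfrac12 I$ cannot be repaired. Writing $T=t_0\,\ketbra{0}{0}\ox Z+t_1\,\ketbra{1}{1}\ox Z$ with $t_0=\alpha(1-\epsilon)+\beta(1-\epsilon+2p\epsilon)$ and $t_1=-\alpha(1-\epsilon)+\beta(1+\epsilon-2p\epsilon)$, the constraints $\tfrac12 I\ox I\pm T\succeq 0$ force $|t_0|\le\tfrac12$ and $|t_1|\le\tfrac12$ separately, and maximizing $-2\alpha$ under these gives exactly $\tfrac{1}{1-\epsilon}$ for every $p$. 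The tight certificate needs the weaker aggregate constraint $|t_0|+|t_1|\le 1$, which is only accessible by letting $M=N$ be a rank-one projector so that one diagonal block of the PSD constraints is relaxed entirely: the paper takes $M=N=\ketbra{0}{0}$ with $K=-\tfrac12 I+\tfrac{2p\epsilon-1-\epsilon}{2(1-\epsilon)}Z$ for $p<\tfrac12$, $M=N=\ketbra{1}{1}$ with the mirrored $K$ for $p>\tfrac12$, and your symmetric ansatz only at $p=\tfrac12$. Note also that the $Z$-coefficient of $K$ must itself carry the factor $1+|1-2p|\epsilon$, i.e.\ $\alpha=-\tfrac{1+|1-2p|\epsilon}{2(1-\epsilon)}$, since the objective is $-2\alpha$ alone. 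Finally, the appeal to Slater's condition is unnecessary: exhibiting matching primal and dual feasible points needs only weak duality.
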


\begin{proof}
First, we are going to prove $\gamma_O(\cN)\le\frac{|1-2p|\epsilon+1}{(1-\epsilon)}$ by showing that the retriever $\cD$ is a feasible solution to the SDP~\eqref{eq:primal SDP}.
To be specific, we have
\begin{align}
    \cD\circ\cN(\rho) &= \cD(\rho')\\
    &= \tr_A[(\rho'^{T}\ox I)\cdot J_{\cD}] \\
    &= \tr_A\left[(\rho'^{T}\ox I) \cdot \frac{|1-2p|\epsilon+1}{2(1-\epsilon)} \left(\frac{1}{|1-2p|\epsilon+1} Z\ox Z + \frac{\epsilon(1-2p)}{|1-2p|\epsilon+1} I\ox Z\right)\right]\\
    &= \frac{1}{2(1-\epsilon)}Z \tr[\rho' Z] + \frac{\epsilon(1-2p)}{2(1-\epsilon)} Z\tr[\rho'].
    \end{align}
Therefore, we further have
\begin{align}
    \tr[Z \cD \circ \cN(\rho)]&= \frac{1}{2(1-\epsilon)} \tr[Z^2] \tr[\rho'Z] + \frac{\epsilon(1-2p)}{2(1-\epsilon)}\tr[Z^2]\tr[\rho']\\
    &= \frac{1}{(1-\epsilon)}((1-\epsilon)\rho_{00}+p\epsilon(\rho_{00}+\rho_{11})-\rho_{11}-\epsilon\rho_{00} \nonumber\\
    &\qquad +p\epsilon(\rho_{00}+\rho_{11})+\epsilon(1-2p)(\rho_{00}+\rho_{11}))\\
    &= \rho_{00}-\rho_{11}\\
    &= \tr[Z\rho],
\end{align}
where the second equality follows from Eq.~\eqref{seq:gad_act_on_state} and the fact that $Z^2=I$.
Hence, the retriever $\cD$ is a feasible solution to retrieve information, implying that $\gamma_O(\cN) \le |c_1|+|c_2| = \frac{|1-2p|\epsilon+1}{1-\epsilon}$.

Second, we use the dual SDP~\eqref{eq:dual_sdp} to show the cost $\gamma_O(\cN)\ge\frac{|1-2p|\epsilon+1}{1-\epsilon}$. To prove this, we are going to split into several cases: $0\leq p<\frac{1}{2}$, $p=\frac{1}{2}$, and $\frac{1}{2}<p\leq 1$. First note that
\begin{align}
    &\tr_{A_iA_o}[(K^T_{A_i}\otimes J^T_{\cN_{A_oA'_i}}\otimes O_{A'_o})(J_{\id_{A_iA_o}}\otimes I_{A'_iA'_o})]\nonumber\\
    &= \tr_{A_iA_o}[(K^T \otimes \sum\limits_{i,j}\ket{j}\bra{i}\otimes\cN^T(\ket{i}\bra{j})\otimes Z)(\sum\limits_{m,n}\ket{m}\bra{n}\otimes\ket{m}\bra{n}\otimes I\otimes I)]\\
    &= \tr_{A_iA_o}[\sum\limits_{i,j,m,n}K^T\ket{m}\bra{n} \otimes \ketbra{j}{i}m\rangle\bra{n}\otimes\cN^T(\ket{i}\bra{j})\otimes Z]\\
    &=\sum\limits_{i,j,n}\tr[K^T\ket{i}\bra{n}]\tr[\ket{j}\bra{n}]\cN^T(\ket{i}\bra{j})\otimes Z\\
    &=\sum\limits_{i,j}\tr[K^T\ket{i}\bra{j}]\cN^T(\ket{i}\bra{j})\otimes Z.
\end{align}

\begin{itemize}
    \item When $0\leq p<\frac{1}{2}$, we set the dual variables $\{M,N,K\}$ as $M=N=\ket{0}\bra{0}, K=-\frac{1}{2}I+\frac{2p\epsilon-1-\epsilon}{2(1-\epsilon)}Z$ and prove that they form a feasible solution to the dual SDP.
    \begin{align}
    &\tr_{A_iA_o}[(K^T_{A_i}\otimes J^T_{\cN_{A_oA'_i}}\otimes O_{A'_o})(J_{\id_{A_iA_o}}\otimes I_{A'_iA'_o})]\nonumber\\
    &=\sum_{i,j} \left(-\frac{1}{2}\tr[\ket{i}\bra{j}] + \frac{2p\epsilon-1-\epsilon}{2(1-\epsilon)}\tr[Z\ket{i}\bra{j}]\right) \cN^T(\ket{i}\bra{j})\otimes Z\\
    &= \sum_{i} \left(-\frac{1}{2}+ (-1)^i \frac{2p\epsilon-1-\epsilon}{2(1-\epsilon)}\right) \cN^T(\ket{i}\bra{i})\otimes Z\\
    &=\left(\frac{p\epsilon-1}{1-\epsilon}\cN^T(\ket{0}\bra{0})+\frac{\epsilon-p\epsilon}{1-\epsilon}\cN^T(\ket{1}\bra{1})\right)\otimes Z\\
    &= -\ket{0}\bra{0}\otimes Z\\
    &= -\ket{00}\bra{00}+\ket{01}\bra{01}.
    \end{align}
    Therefore,
    \begin{align}
        M_{A'_i}\ox I_{A'_o} + \tr_{A_iA_o}[(K_{A_i}^T\ox J_{\cN_{A_oA'_i}}^T\ox O_{A'_o}) (J_{\id_{A_iA_o}}\ox I_{A'_iA'_o})] &= 2\ket{01}\bra{01}\geq 0\\
        N_{A'_i}\ox I_{A'_o} - \tr_{A_iA_o}[(K_{A_i}^T\ox J_{\cN_{A_oA'_i}}^T\ox O_{A'_o}) (J_{\id_{A_iA_o}}\ox I_{A'_iA'_o})] &= 2\ket{00}\bra{00}\geq 0.
    \end{align}

    \item When $p=\frac{1}{2}$, we set the dual variables $\{M,N,K\}$ as $M=N=\frac{1}{2}I, K=-\frac{1}{2(1-\epsilon)}Z$ and prove that they form a feasible solution to the dual SDP.
    \begin{align}
    &\tr_{A_iA_o}[(K^T_{A_i}\otimes J^T_{\cN_{A_oA'_i}}\otimes O_{A'_o})(J_{\id_{A_iA_o}}\otimes I_{A'_iA'_o})]\nonumber\\
    &= \sum_{i,j} \tr\left[-\frac{1}{2(1-\epsilon)}Z\ket{i}\bra{j}\right] \cN^T(\ket{i}\bra{j})\otimes Z\\
    &= -\frac{1}{2(1-\epsilon)}(\cN^T(\ket{0}\bra{0})-\cN^T(\ket{1}\bra{1})) \otimes Z\\
    &= -\frac{1}{2}Z\ox Z.
    \end{align}
    Therefore,
    \begin{align}
        M_{A'_i}\ox I_{A'_o} + \tr_{A_iA_o}[(K_{A_i}^T\ox J_{\cN_{A_oA'_i}}^T\ox O_{A'_o}) (J_{\id_{A_iA_o}}\ox I_{A'_iA'_o})] &= \frac{1}{2}I\ox I-\frac{1}{2}Z\ox Z\geq 0,\\
        N_{A'_i}\ox I_{A'_o} - \tr_{A_iA_o}[(K_{A_i}^T\ox J_{\cN_{A_oA'_i}}^T\ox O_{A'_o}) (J_{\id_{A_iA_o}}\ox I_{A'_iA'_o})] &= \frac{1}{2}I\ox I+\frac{1}{2}Z\ox Z\geq 0.
    \end{align}

    \item When $\frac{1}{2}<p\leq 1$, we set the dual variables $\{M,N,K\}$ as $M=N=\ket{1}\bra{1}, K=\frac{1}{2}I+\frac{\epsilon-2p\epsilon-1}{2(1-\epsilon)}$ and prove that they form a feasible solution to the dual SDP.
    \begin{align}
    &\tr_{A_iA_o}[(K^T_{A_i}\otimes J^T_{\cN_{A_oA'_i}}\otimes O_{A'_o})(J_{\id_{A_iA_o}}\otimes I_{A'_iA'_o})]\nonumber\\
    &=\sum_{i,j} \left(\frac{1}{2}\tr[\ket{i}\bra{j}] + \frac{\epsilon-2p\epsilon-1}{2(1-\epsilon)}\tr[Z\ket{i}\bra{j}]\right) \cN^T(\ket{i}\bra{j})\otimes Z\\
    &= \sum_{i}\left(\frac{1}{2}+ (-1)^i \frac{\epsilon-2p\epsilon-1}{2(1-\epsilon)}\right) \cN^T(\ket{i}\bra{i})\otimes Z\\
    &= \left(\frac{-p\epsilon}{1-\epsilon}\cN^T(\ket{0}\bra{0}) + \frac{p\epsilon+1-\epsilon}{1-\epsilon}\cN^T(\ket{1}\bra{1})\right) \otimes Z\\
    &= \ket{1}\bra{1}\otimes Z\\
    &= \ket{10}\bra{10}-\ket{11}\bra{11}.
    \end{align}
Therefore,
\begin{align}
    M_{A'_i}\ox I_{A'_o} + \tr_{A_iA_o}[(K_{A_i}^T\ox J_{\cN_{A_oA'_i}}^T\ox O_{A'_o}) (J_{\id_{A_iA_o}}\ox I_{A'_iA'_o})] &= 2\ket{10}\bra{10}\geq 0,\\
    N_{A'_i}\ox I_{A'_o} - \tr_{A_iA_o}[(K_{A_i}^T\ox J_{\cN_{A_oA'_i}}^T\ox O_{A'_o}) (J_{\id_{A_iA_o}}\ox I_{A'_iA'_o})] &= 2\ket{11}\bra{11}\geq 0.
\end{align}
\end{itemize}

Hence, we conclude that under different $p$ values, the corresponding solutions $\{M,N,K\}$ are all feasible. It can be easily verified that $-\tr[KO] = \frac{|1-2p|\epsilon+1}{1-\epsilon}$ for all these cases, which implies that the inequality $\gamma_O(\cN) \ge \frac{|1-2p|\epsilon+1}{1-\epsilon}$ always holds. Combining it with the primal part, we have $\gamma_O(\cN) = \frac{|1-2p|\epsilon+1}{1-\epsilon}$.
\end{proof}

From the proof, we also know that the above retriever $\cD$ is optimal.
Moreover, since the Choi matrix has been given, the corresponding Kraus operators of the retriever can be easily derived. 
\begin{itemize}
    \item When $0\leq p < \frac{1}{2}$, 
    the Kraus operators for $\cD_1$ and $\cD_2$ are $E_{\cD_1}=\{\alpha\ket{1}\bra{1},\ket{0}\bra{0},\beta\ket{0}\bra{1}\}$ and 
    $E_{\cD_2}=\{\ket{1}\bra{0},\alpha\ket{0}\bra{1},\beta\ket{0}\bra{1}\}$, where $\alpha=\sqrt{\frac{1}{1+\epsilon|1-2p|}}$ and $\beta=\sqrt{\frac{\epsilon|1-2p|}{1+\epsilon|1-2p|}}$.
    
    \item When $\frac{1}{2}\leq p \leq 1$, the Kraus operators for $\cD_1$ and $\cD_2$ are $E_{\cD_1}=\{\ket{1}\bra{1},\alpha\ket{0}\bra{0},\beta\ket{1}\bra{0}\}$ and $E_{\cD_2}=\{\ket{0}\bra{1},\alpha\ket{1}\bra{1},\beta\ket{1}\bra{1}\}$, where $\alpha=\sqrt{\frac{1}{1+\epsilon|1-2p|}}$ and $\beta=\sqrt{\frac{\epsilon|1-2p|}{1+\epsilon|1-2p|}}$.
\end{itemize}

\section{Retrieving Cost for Mixed Pauli Channel}
The mixed Pauli channel is a common noise model in quantum computers. For single-qubit cases, a quantum state corrupted by mixed Pauli becomes 
\begin{equation}\label{eq:1_qubit_depo}
    \cN_{\rm Pauli}(\rho) = p_i\rho + p_x X\rho X + p_y Y\rho Y + p_z Z\rho Z,
\end{equation}
where $p_i,p_x,p_y,p_z$ are the corresponding probabilities with $p_i+p_x+p_y+p_z=1$ and $0 \le p \le 1$ for each $p \in \{p_i,p_x,p_y,p_z\}$. For general $n$-qubit cases, the noisy state is 
\begin{equation}\label{eq:n_qubit_depo}
    \cN_{\rm Pauli}(\rho)=\sum_\sigma p_\sigma \sigma \rho \sigma,
\end{equation}
where the sum is over all the $n$-qubit Pauli operators $\{\sigma\}$, and $\{p_\sigma\}$ are the corresponding probabilities with $\sum_\sigma p_\sigma=1$ and $0\le p_\sigma\le 1$. Note that the depolarizing channel is a special case of the mixed Pauli channels, where $p_x=p_y=p_z$.

\begin{proposition}\label{props_mixed_pauli}
Given an $n$-qubit observable $O=\bigotimes_{i=1}^n \sigma_i$, where $\sigma_i\in\{X,Y,Z,I\}$, and a mixed Pauli channel $\cN$. The minimum cost $\gamma_O(\cN)$ to retrieve the information $\tr[\rho O]$ is $\gamma_O(\cN) = \frac{1}{\sum_{\sigma^+} p_{\sigma^+}-\sum_{\sigma^-} p_{\sigma^-}}$, and the corresponding retriever $\cD$ can be written in the form of a Choi matrix $J_{\cD} = c_1 J_{\cD_1} + c_2 J_{\cD_1}$, where $J_{\cD_1} = \frac{1}{2^n}I^{\ox 2n}+\frac{1}{2^n}(O^T \ox O), J_{\cD_2} = \frac{1}{2^n}I^{\ox 2n}-\frac{1}{2^n}(O^T \ox O)$ and $c_1=-c_2=\frac{1}{2(\sum_{\sigma^+} p_{\sigma^+}-\sum_{\sigma^-} p_{\sigma^-})}$. The $\{\sigma^+\}$ are all the Pauli operators that commute with the observable, i.e., $\sigma^+\in P_n, [\sigma^+, O]=0$, and correspondingly $\sigma^-\in P_n, \{\sigma^-, O\}=0$.
\end{proposition}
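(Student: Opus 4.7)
My plan is to apply the primal-dual SDP framework established in the GAD proposition. The central structural fact that enables a closed-form analysis is that a mixed Pauli channel acts as a scalar on any Pauli-string observable: by direct expansion, $\cN(O) = \cN^\dagger(O) = \lambda O$ with $\lambda := \sum_{\sigma^+} p_{\sigma^+} - \sum_{\sigma^-} p_{\sigma^-}$, exactly the denominator appearing in the claimed retrieving cost. I will also repeatedly use two elementary identities that hold for any non-identity Pauli string: $O^2 = I$ and $\tr[O]=0$. Throughout I implicitly assume $\lambda > 0$, which holds in any physically relevant regime where the identity component dominates.

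For the primal upper bound I first verify that $J_{\cD_1}$ and $J_{\cD_2}$ are Choi matrices of genuine CPTP maps. Positivity of $\frac{1}{2^n}(I^{\ox 2n} \pm O^T \ox O)$ follows because $(O^T \ox O)^2 = I^{\ox 2n}$, so $O^T \ox O$ has spectrum $\{\pm 1\}$; trace-preservation on the output system reduces to $\tr[O] = 0$. The HPTS combination $J_\cD = c_1 J_{\cD_1} + c_2 J_{\cD_2}$ collapses to the rank-one term $\frac{1}{2^n\lambda}(O^T \ox O)$, so via the standard Choi-action formula the retriever acts as $\cD(\tau) = \frac{\tr[\tau O]}{2^n\lambda}\, O$. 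Composing with $\cN$ and invoking $\tr[\cN(\rho) O] = \tr[\rho\,\cN^\dagger(O)] = \lambda\tr[\rho O]$ immediately yields $\tr[\cD\circ\cN(\rho)\, O] = \tr[\rho O]$, so the retriever is feasible and attains the overall cost $|c_1| + |c_2| = 1/\lambda$.

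For the matching lower bound I will exhibit a feasible solution to the dual SDP~\eqref{eq:dual_sdp} with objective value $1/\lambda$. Motivated by the $O\in\{X,Y\}$ case of GAD, I set $M = N = I/2^n$ (so $\tr[M]=\tr[N]=1$) and $K = -O/(2^n\lambda)$, which makes the objective $-\tr[KO] = \tr[O^2]/(2^n\lambda) = 1/\lambda$ immediate. The core technical step is then evaluating the central expression $\tr_{A_iA_o}[(K^T \ox J_\cN^T \ox O)(J_I \ox I)]$. Following the same reduction carried out in the GAD proof, this simplifies to $\cN^T(K^T) \ox O$; because each Pauli equals $\pm$ its transpose and these signs cancel pairwise in the conjugation $\sigma X \sigma \mapsto \sigma^T X \sigma^T$, one has $\cN^T = \cN$, and the remaining scalar $\cN(K^T)$ evaluates to $\mp O/2^n$ by the eigen-relation $\cN(O) = \lambda O$. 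The two required semidefinite inequalities thus reduce to $\frac{1}{2^n}(I\ox I \pm O\ox O)\geq 0$, which again follows from the spectrum of $O\ox O$ being $\{\pm 1\}$. The chief obstacle is book-keeping: tracking the $Y$-parity signs from $O^T = (-1)^{\#Y} O$ against the signs in $\cN^T(K^T)$, and keeping the four-system index structure of the central term straight. Once these cancellations are carried through, dual feasibility closes and yields $\gamma_O(\cN) \geq 1/\lambda$, matching the primal bound.
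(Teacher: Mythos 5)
Your proposal is correct and follows essentially the same route as the paper: the same primal feasibility check for the stated retriever (using the eigen-relation $\cN^\dagger(O)=\lambda O$ with $\lambda=\sum_{\sigma^+}p_{\sigma^+}-\sum_{\sigma^-}p_{\sigma^-}$, which the paper derives via the commutation/anticommutation bookkeeping) and the same dual certificate $M=N=I/2^n$, $K=-O/(2^n\lambda)$, with the two semidefinite constraints reducing to $\frac{1}{2^n}(I\ox I\pm O^T\ox O)\geq 0$. The only quibbles are cosmetic: $\frac{1}{2^n\lambda}O^T\ox O$ is full rank rather than rank one, and like the paper you implicitly exclude the trivial case $O=I^{\ox n}$ (needed for $\tr[O]=0$).
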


\begin{proof}
First, we are going to prove $\gamma_O(\cN) \le \frac{1}{\sum_{\sigma^+} p_{\sigma^+}-\sum_{\sigma^-} p_{\sigma^-}}$. We prove this by utilizing SDP~\eqref{eq:primal SDP} and  showing that the retriever $\cD$ is a feasible solution.

For an arbitrary state $\rho$, after the mixed Pauli channel, it becomes $\rho' = \cN(\rho)=\sum_{\sigma^+}p_{\sigma^+}\sigma^+ \rho \sigma^+ + \sum_{\sigma^-}p_{\sigma^-}\sigma^- \rho \sigma^-$. Then, we have
\begin{align}
    \cD\circ\cN(\rho) &= \cD(\rho')\\
    &= \tr_A[(\rho'^{T}\ox I)\cdot J_{\cD}^O] \\
    &= \frac{1}{2^n \cdot (\sum_{\sigma^+} p_{\sigma^+}-\sum_{\sigma^-} p_{\sigma^-})}\tr_A[(\rho'^{T}\ox I) \cdot (O^T\ox O)]\\
    &= \frac{1}{2^n \cdot (\sum_{\sigma^+} p_{\sigma^+}-\sum_{\sigma^-} p_{\sigma^-})} \tr_A[\rho'^T O^T\ox IO]\\
    &= \frac{1}{2^n \cdot (\sum_{\sigma^+} p_{\sigma^+}-\sum_{\sigma^-} p_{\sigma^-})} O  \tr[\rho'^T O^T]\\
\end{align}
Since the transpose is linear, we have $\tr[\rho'^T O^T] = \tr[(\rho' O)^T] = \tr[\rho' O]$.
Therefore,
\begin{align}
    \cD\circ\cN(\rho) &= \frac{1}{2^n \cdot (\sum_{\sigma^+} p_{\sigma^+}-\sum_{\sigma^-} p_{\sigma^-})} O  \tr[\rho' O]\\
    &= \frac{1}{2^n \cdot (\sum_{\sigma^+} p_{\sigma^+}-\sum_{\sigma^-} p_{\sigma^-})} O \tr\left[\left(\sum_{\sigma^+}p_{\sigma^+}\sigma^+ \rho \sigma^+ + \sum_{\sigma^-}p_{\sigma^-}\sigma^- \rho \sigma^-\right) O\right] \\
    &= \frac{1}{2^n \cdot (\sum_{\sigma^+} p_{\sigma^+}-\sum_{\sigma^-} p_{\sigma^-})} O \tr\left[\left(\sum_{\sigma^+}p_{\sigma^+}\sigma^+ \rho O \sigma^+ - \sum_{\sigma^-}p_{\sigma^-}\sigma^- \rho O \sigma^-\right)\right]\\
    &= \frac{1}{2^n \cdot (\sum_{\sigma^+} p_{\sigma^+}-\sum_{\sigma^-} p_{\sigma^-})} O \tr\left[\left(\sum_{\sigma^+}p_{\sigma^+} \rho O \sigma^+ \sigma^+ - \sum_{\sigma^-}p_{\sigma^-} \rho O \sigma^- \sigma^-\right)\right] \\
    &= \frac{1}{2^n \cdot (\sum_{\sigma^+} p_{\sigma^+}-\sum_{\sigma^-} p_{\sigma^-})} O \tr\left[\left(\sum_{\sigma^+}p_{\sigma^+} - \sum_{\sigma^-}p_{\sigma^-}\right) \rho O\right]\\
    &= \frac{1}{2^n} O \tr[\rho O],
\end{align}
where the third equality follows from the commutative properties of $\{\sigma_+\}$ and $\{\sigma_-\}$ with respect to the observable, the fourth equality follows from the cyclic property of trace, and the fifth equality follows from the identities that $\sigma^+\sigma^+ = \sigma^-\sigma^- = I^{\ox n}$.
Then, it is easy to check that
\begin{align}
    \tr[O \cD\circ\cN(\rho)] &= \frac{1}{2^n} \tr[O^2]\tr[\rho O]\\
    &= \frac{1}{2^n}\tr[I^{\ox n}]\tr[\rho O]\\
    &= \tr[\rho O],
\end{align}
which means that the retriever $\cD$ is a feasible protocol, and thus we have $\gamma_O(\cN) \le |c_1|+|c_2| = \frac{1}{\sum_{\sigma^+} p_{\sigma^+}-\sum_{\sigma^-} p_{\sigma^-}}$.

Second, we use the dual SDP~\eqref{eq:dual_sdp} to show that $\gamma_O(\cN) \ge \frac{1}{\sum_{\sigma^+}p_{\sigma^+}-\sum_{\sigma^-}p_{\sigma^-}}$. We set the dual variables as $M=N=\frac{1}{2^n}I, K = q O$, where $n$ is the number of qubits and $q \equiv -\frac{1}{2^n(\sum_{\sigma^+}p_{\sigma^+}-\sum_{\sigma^-}p_{\sigma^-})}$. We will show the variables $\{M,N,K\}$ is a feasible solution to the dual problem. 
\begin{align}
    &\tr_{A_iA_o}[(K^T_{A_i}\ox J^T_{\cN_{A_oA'_i}}\ox O_{A'_o})(J_{\id_{A_iA_o}}\ox I_{A'_iA'_o})]\nonumber\\
    &= \tr_{A_iA_o}\left[\left(qO^T \ox \sum_{i,j}\ket{j}\bra{i}\ox\sum_{\sigma}p_\sigma \sigma^T\ket{j}\bra{i}\sigma^T \ox O\right)\left(\sum_{m,n}\ket{m}\bra{n}\ox\ket{m}\bra{n}\ox I\ox I\right)\right]\\
    &= \tr_{A_iA_o}\left[\sum_{i,j,m,n}\sum_{\sigma}q O^T\ket{m}\bra{n}\ox \ket{j}\bra{i}m\rangle\bra{n}\ox p_\sigma \sigma^T\ket{j}\bra{i}\sigma^T\ox O\right]\\
    &= \sum_{i,j,n}\sum_{\sigma} q p_\sigma\tr[O^T\ket{i}\bra{n}\ox\ket{j}\bra{n}]\sigma^T\ket{j}\bra{i}\sigma^T\ox O\\
    &= \sum_{i,j,n}\sum_{\sigma} q p_\sigma\tr[O^T\ket{i}\bra{n}]\tr[\ket{j}\bra{n}]\sigma^T\ket{j}\bra{i}\sigma^T\ox O\\
    &= \sum_{i,j}\sum_{\sigma} q p_\sigma O^T_{ji}\sigma^T\ket{j}\bra{i}\sigma^T\ox O\\
    &= \sum_{\sigma} q p_\sigma \sigma^T O^T \sigma^T\ox O\\
    &= q \left(\sum_{\sigma^+} p_{\sigma^+} O^T\ox O-\sum_{\sigma^-} p_{\sigma^-} O^T\ox O\right)\\
    &= -\frac{1}{2^n}O^T\ox O,
\end{align}
where $O^T_{ji} \equiv \bra{j}O^T\ket{i}$.
This implies that
\begin{align}
    M_{A'_i}\ox I_{A'_o} + \tr_{A_iA_o}[(K_{A_i}^T\ox J_{\cN_{A_oA'_i}}^T\ox O_{A'_o}) (J_{\id_{A_iA_o}}\ox I_{A'_iA'_o})] &= \frac{1}{2^{2n}}I\ox I - \frac{1}{2^n}O^T\ox O\geq 0,\\
    N_{A'_i}\ox I_{A'_o} - \tr_{A_iA_o}[(K_{A_i}^T\ox J_{\cN_{A_oA'_i}}^T\ox O_{A'_o}) (J_{\id_{A_iA_o}}\ox I_{A'_iA'_o})] &= \frac{1}{2^{2n}}I\ox I + \frac{1}{2^n}O^T\ox O\geq 0,
\end{align}
which means that $\{M,N,K\}$ is a feasible solution to the dual SDP~\eqref{eq:dual_sdp}. Therefore, we have $\gamma_O(\cN) \ge -\tr[KO] = \frac{1}{\sum_{\sigma^+} p_{\sigma^+}-\sum_{\sigma^-} p_{\sigma^-}}$. Combining this with the primal part, we conclude that $\gamma_O(\cN) = \frac{1}{\sum_{\sigma^+} p_{\sigma^+}-\sum_{\sigma^-} p_{\sigma^-}}$.
\end{proof}

From the proof, we also know that the above retriever $\cD$ is optimal.
Moreover, from the given Choi matrix, one could derive that the corresponding Kraus operators of $\cD_1$ and $\cD_2$ are  
$$E_{\cD_1} = \left\{\sqrt{\frac{1}{2^{2n-1}}} E_i \middle| E_i\in\{I,X,Y,Z\}^{\ox n}, E_iO=OE_i \right\}$$ and $$E_{\cD_2} = \left\{\sqrt{\frac{1}{2^{2n-1}}} E_i \middle| E_i\in\{I,X,Y,Z\}^{\ox n}, E_iO=-OE_i \right\}.$$

Note that the depolarizing channel is a special case of the mixed Pauli channels, where all noise coefficients are the same. For the single-qubit case, the depolarizing channel is $\cN_{\rm depo}(\rho) = (1-\epsilon)\rho + \epsilon \frac{I}{2}$, which is equivalent to Eq.~\eqref{eq:1_qubit_depo} by setting $p_x=p_y=p_z=\frac{\epsilon}{4}$ and $p_i=1-p_x-p_y-p_z=1-\frac{3\epsilon}{4}$. The $n$-qubit depolarizing channel is $\cN_{n-{\rm depo}}(\rho) = (1-\epsilon)\rho + \epsilon \frac{I}{2^n}$, by setting the parameters in Eq.~\eqref{eq:n_qubit_depo} as $p_\sigma=\frac{\epsilon}{4^n}$ and $p_I=1-\sum_\sigma p_\sigma= 1-\epsilon\frac{4^n-1}{4^n}$, where $\{\sigma\}$ are the $n$-qubit Pauli operators excluding $I^{\ox n}$, and $\{p_\sigma\}$ are the corresponding probabilities. It is obvious that $I^{\ox n}$ commutes with any observable, i.e., $I\in \sigma^+$. Since the rest probabilities are identical, we have $\sum_{\sigma^+} p_{\sigma^+} - \sum_{\sigma^-} p_{\sigma^-} = 1-\epsilon$, which means that the retrieving cost for depolarizing channel is $\gamma_O(\cN_{n-{\rm depo}})=\frac{1}{1-\epsilon}$ for $O$ being an $n$-qubit Pauli operator.

\end{document}